\documentclass[final]{siamltex}

\usepackage{amsmath}
\usepackage{amsfonts}
\usepackage{latexsym}
\usepackage{epsfig}
\usepackage{amssymb}
\usepackage{epstopdf}

\newcommand{\tra}[1]{\,{\vphantom{#1}}^{\textsc{t}}\!{#1}}

\newtheorem{remark}{{\it Remark}}[section]

\begin{document}

\title{Discrete calculus of variation for homographic configurations in celestial mechanics}

\author{Philippe Ryckelynck\footnotemark[1]~~ Laurent Smoch\footnotemark[1]} 

\thanks{{ULCO, LMPA, F-62100 Calais, France.\hspace{1cm}e-mail: \{ryckelyn,smoch\}@lmpa.univ-littoral.fr\\
\indent \indent Univ Lille Nord de France, F-59000 Lille, France. CNRS, FR 2956, France.}}

\maketitle
\baselineskip15pt

\begin{abstract}
We provide in this paper the discrete equations of motion for the newtonian $n$-body problem deduced from the quantum calculus of variations (Q.C.V.) developed in \cite{Cre,CFT,RS1,RS2}. These equations are brought into the usual lagrangian and hamiltonian formulations of the dynamics and yield sampled functional equations involving generalized scale derivatives. We investigate especially homographic solutions to these equations that we obtain by solving algebraic systems of equations similar to the classical ones. When the potential forces are homogeneous, homographic solutions to the discrete and classical equations may be related through an explicit expansion factor that we provide. Consequently, perturbative equations both in lagrangian and hamiltonian formalisms are deduced.
\end{abstract}

\begin{keywords}
$n$-body problem, homographic solutions, central configurations, periodic solutions, discretization
\end{keywords}

\begin{AMS}
49J21, 70F10, 49J15, 70F07
\end{AMS}

\section{Introduction}

This paper is devoted to the application of discrete calculus of variations
(see Cresson \cite{Cre}, Torres and Frederico \cite{CFT}, Ryckelynck and
Smoch \cite{RS1,RS2}) to celestial mechanics. We focus on central configurations and especially on
libration points or regular polygonal solutions \cite{El1,El2}, which are the most well-known periodic solutions. These motions are entirely explicit
provided we can solve specific algebraic equations for the coordinates in a
rotating frame. Another family of periodic solutions consists in choreographic solutions \cite{CM} which are more involved and obtained through
topological arguments. In this respect we have obtained in \cite{RS2}
choreographic solutions to quadratic lagrangian systems either in classical
and discrete contexts. Both families of solutions are infinite and give rise
to a huge number of theoretical and numerical works. 

The discrete calculus of variations deals with sets of non-differentiable curves by substituting the classical derivative for a so-called generalized scale derivative.  
Formally we used in \cite{RS1,RS2} the following discretization operator 
\begin{equation}
\Box \mathbf{u}(t)=\sum_{\ell =-N}^N\frac{\gamma _\ell }\varepsilon \chi (t+\ell \varepsilon )\mathbf{u}(t+\ell \varepsilon ),  
\label{ourbox}
\end{equation}
for all $t\in [t_0,t_f]$, for all $\mathbf{u}:[t_0,t_f]\rightarrow \mathbb{R}%
^d$, $d\in \mathbb{N}^{\star }$, and where $\chi $ denotes the
characteristic function of $[t_0,t_f]$. The notation $\Box ^{\star }$ shall
denote the adjoint operator to $\Box $ which is obtained from $\Box $ by
reversing its coefficients $\gamma _\ell $. The investigation of the convergence of $\Box \mathbf{u}$ and $-\Box ^{\star }\mathbf{u}$ to $%
\dot{\mathbf{u}}$ has been undertaken in \cite[Proposition 2.3]{RS1}. We have already pointed out the fact that for all $\mathbf{u}\in \mathcal{C}^2([t_0,t_f],\mathbb{R}^d)$ and for all $t\in ]t_0,t_f[$, 
\begin{equation}
\lim_{\varepsilon \to 0}\Box \mathbf{u}(t)=\lim_{\varepsilon \to 0}-\Box^{\star }\mathbf{u}(t)=\dot{\mathbf{u}}(t)
\label{convbox}
\end{equation} 
locally uniformly in $]t_0,t_f[$ if and only if
\begin{equation}
\sum_\ell \gamma _\ell =0\mbox{ and }\sum_\ell \ell \gamma _\ell =1.
\label{convcond}
\end{equation} 

These notations and properties being introduced, we consider a system of $n$ particles $P_i$, with mass $m_i$, located at points $\mathbf{x}_i=(x_{ik})_k\in \mathbb{R}^d$ where $i=1,\ldots ,n$ and $k=1,\ldots ,d$. The distance $r_{ij}$ between $P_i$ and $P_j$ is defined by $\displaystyle r_{ij}^2=%
\sum_{k=1}^d(x_{ik}-x_{jk})^2$. We assume that there exist $\frac{n(n-1)}2$
functions of forces $f_{ij}(r_{ij})$ determining the interactions between
each pair of particles $(P_i,P_j)$. So we set for all $\mathbf{x},\mathbf{y}\in(\mathbb{R}^d)^n$
\begin{equation}
\displaystyle T(\mathbf{y})=\frac 12\sum_{i=1}^n\sum_{k=1}^dm_iy_{ik}^2,~
\displaystyle U(\mathbf{x})=\sum_{i<j}f_{ij}(r_{ij}).
\label{TU}
\end{equation}
Accordingly, the Lagrangian and the Hamiltonian are defined for all configurations of particles $\mathbf{x}=(\mathbf{x}_1,\ldots ,\mathbf{x}_n)\in \mathcal{C}^0([t_0,t_f],(\mathbb{R}^d)^n)$ in the classical and discrete settings respectively by
\begin{eqnarray}
\mathcal{L}_c =T(\dot {\mathbf{x}})+U(\mathbf{x}),~\mathcal{H}_c=T(\dot {%
\mathbf{x}})-U(\mathbf{x}), \label{LUTH1}\\
\mathcal{L}_d =T(\Box \mathbf{x})+U(\mathbf{x}),~\mathcal{H}_d=T(\Box 
\mathbf{x})-U(\mathbf{x}). \label{LUTH2}
\end{eqnarray}
The homogeneous potential functions of the shape $f_{ij}(r)=\mu _{ij}r^\beta$, with some common exponent $\beta\in\mathbb{Q}-\{0,2\}$, constitute a case of
special interest. We are particularly interested in the gravific interaction, described by $\beta =-1$ and $\mu
_{ij}=gm_im_j$, so that $f_{ij}(r)=g\frac{m_im_j}r$. 

Now, when working in a rotating frame with constant pulsation $\omega$, we look for the homographic solutions to the $n$-body problem, i.e. the solutions to the equations of motion of the shape 
\begin{equation}
x_{i1}(t)=a_i(t)\cos \omega t-b_i(t)\sin \omega t,~x_{i2}(t)=a_i(t)\sin\omega t+b_i(t)\cos \omega t,  
\label{homographicsol}
\end{equation}
for some functions $a_i(t)$ and $b_i(t)$, according to the additional conditions $x_{ij}=0\mbox{ for }%
j\geq 3$. The connection between homographic and central
configurations is explored in \cite{BP,SS}. 

The paper is organized as follows. In Section 2, we give the four systems of equations of motion for the $n$-body problem, either in the lagrangian or hamiltonian formulations and either classical or discrete settings. This being done, we discuss the existence of constants of motion and galilean equilibria. In Section 3, 
we introduce the additionnal operators $V_c,V_s,W_c,W_s$, used when expressing the Euler-Lagrange and hamiltonian equations in a rotating frame to be further developed in Section 4. There we provide the convenient formulas for $\mathcal{L}_c,\mathcal{L}_d,\mathcal{H}_c,\mathcal{H}_d$, and the four corresponding sets of
equations of motion. In Section 5, we determine relative equilibria solutions to the $n$-body
problem that is to say, the solutions (\ref{homographicsol}) obtained such that the functions $a_i(t)$ and $b_i(t)$ are constant w.r.t. time.
When the potential functions are homogeneous, we show that the solutions to the discrete Euler-Lagrange equations are homothetic to those to the classical Euler-Lagrange equations. The homothety ratio $\varphi (\varepsilon )$ that we call the expansion factor, is determined and its convergence as $\varepsilon$ tends to $0$ is studied. Finally, in Section 6, we provide some numerical experiments and results which illustrate our analysis.

\section{Discrete and classical equations of motion for the $n$-body problem in a galilean frame}

\subsection{Euler-Lagrange equations}

To motivate our work, let us recall the well-known classical equations of motion for the newtonian $n$-body problem according to (\ref{LUTH1})
\begin{equation}
\displaystyle m_i\ddot x_{ik}=\sum_{j\neq i}{f}_{ij}'(r_{ij})\frac{%
x_{ik}-x_{jk}}{r_{ij}},
\label{celuniv}
\end{equation}
where $i\in\{1,\ldots,n\},k\in\{1,\ldots,d\}$. By considering the discrete Euler-Lagrange equations introduced in \cite{RS1}, we may deduce the discrete analogous equations to (\ref{celuniv}).
\begin{proposition}
Let a system of $n$ particles interacting according to (\ref{LUTH2}) then the discrete equations of motion are, for all $i$ and $k$,
\begin{equation}
\displaystyle -m_i\Box ^{\star }\Box x_{ik}=\sum_{j\neq i}{f}_{ij}'(r_{ij})\frac{x_{ik}-x_{jk}}{r_{ij}}.  
\label{mcelter}
\end{equation}
\end{proposition}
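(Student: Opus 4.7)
The plan is to specialize the discrete Euler--Lagrange equations of \cite{RS1} to the Lagrangian $\mathcal{L}_d$ given in (\ref{LUTH2}). For a Lagrangian depending on $\mathbf{x}$ and on $\Box\mathbf{x}$, these equations read
\[
\frac{\partial \mathcal{L}_d}{\partial x_{ik}}\;+\;\Box^\star\!\left(\frac{\partial \mathcal{L}_d}{\partial (\Box x)_{ik}}\right)\;=\;0,
\]
the presence of $\Box^\star$ (rather than $\Box$) being dictated by the discrete integration-by-parts formula established in \cite{RS1}, which is what makes $\Box^\star$ the natural discrete analogue of $-d/dt$. Once this template is accepted, the proof reduces to two elementary partial-derivative computations, paralleling the proof of the continuous equations (\ref{celuniv}).

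First I would handle the kinetic piece. Since $T(\Box\mathbf{x})=\tfrac{1}{2}\sum_{i,k}m_i(\Box x_{ik})^2$ and $U$ does not depend on $\Box\mathbf{x}$, we get $\partial\mathcal{L}_d/\partial(\Box x)_{ik}=m_i\,\Box x_{ik}$; applying $\Box^\star$ produces the term $m_i\,\Box^\star\Box x_{ik}$, which will give the left-hand side of (\ref{mcelter}) after a sign flip. Next I would handle the potential piece. Since $T$ does not depend on $\mathbf{x}$, we have $\partial\mathcal{L}_d/\partial x_{ik}=\partial U/\partial x_{ik}$, and using $\partial r_{ij}/\partial x_{ik}=(x_{ik}-x_{jk})/r_{ij}$ for $j\neq i$, the derivative of $U=\sum_{i<j}f_{ij}(r_{ij})$ collapses, via the usual symmetry $r_{ij}=r_{ji}$ and $f_{ij}=f_{ji}$, into the symmetric sum $\sum_{j\neq i}f'_{ij}(r_{ij})(x_{ik}-x_{jk})/r_{ij}$.

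Combining the two pieces in the discrete Euler--Lagrange equation and transposing the potential contribution to the right yields (\ref{mcelter}) exactly. There is no analytic obstacle in the argument: the only point to check carefully is the bookkeeping in the second step, namely that the partial with respect to $x_{ik}$ of a sum constrained by $i<j$ reproduces the unconstrained sum over $j\neq i$; everything else is a direct transcription of the continuous derivation, with $d/dt$ replaced by $\Box$ on entry to $T$ and by $-\Box^\star$ on exit through the variational principle of \cite{RS1}.
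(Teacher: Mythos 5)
Your proposal is correct and follows exactly the paper's route: it invokes the discrete Euler--Lagrange equations of \cite{RS1} in the form $\Box^\star\frac{\partial \mathcal{L}_d}{\partial \Box x_{ik}}+\frac{\partial \mathcal{L}_d}{\partial x_{ik}}=0$ and specializes them to $\mathcal{L}_d=T(\Box\mathbf{x})+U(\mathbf{x})$, which is precisely what the paper does (it merely labels the two partial-derivative computations ``an easy consequence''). Your explicit handling of the kinetic term $m_i\Box^\star\Box x_{ik}$ and of the gradient of $U$, including the passage from the sum over $i<j$ to the sum over $j\neq i$, supplies correct detail the paper omits.
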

\begin{proof} We deduced in \cite[Theorem 4.1]{RS1} the discrete Euler-Lagrange
equations which may be written as 
\begin{equation}
\Box^\star\frac{\partial \mathcal{L}_d}{\partial \Box x_{ik}}(t,\mathbf{x}%
(t),\Box \mathbf{x}(t))+\frac{\partial \mathcal{L}_d}{\partial x_{ik}}(t,%
\mathbf{x}(t),\Box \mathbf{x}(t))=0,  
\label{intrinsic}
\end{equation}
for all $i\in\{1,\ldots,n\}$ and $k\in\{1,\ldots,d\}$. Equations (\ref{mcelter}) are an easy consequence of (\ref{intrinsic}) applied to $\mathcal{L}_d$ given in (\ref{LUTH2}).
\end{proof}

\begin{remark}\rm
The explicit value of the left-hand side of (\ref{mcelter}) is given help to the following formula 
\begin{equation}
\Box ^{\star }\Box \mathbf{f}(t)=\frac 1{\varepsilon ^2}\sum_{{\tiny 
\begin{array}[t]{c}
|\ell|\leq 2N \\ 
|j|\leq N \\ 
|\ell+j|\leq N
\end{array}
}}\hspace{-0.2cm}\gamma _{\ell +j}\gamma _j\chi (t-j\varepsilon )\chi
(t+\ell \varepsilon )\mathbf{f}(t+\ell \varepsilon ).  
\label{mchomat2}
\end{equation}
which is excerpt from \cite{RS1} and shall be used throughout the paper. In that sense, formula (\ref{mcelter}) may be thought as a system of functional
delayed equations.
\end{remark}

\subsection{Hamilton's equations}

In order to provide the hamiltonian equations equivalent to the classical and discrete Euler-Lagrange equations, 
we introduce the components of the momenta $p_{ik}=\frac{\partial \mathcal{L}_c}{\partial\dot {x_{ik}}}=\frac{\partial T}{\partial \dot {x_{ik}}}$. We suppose that
the hessian matrix of $T$ is definite. Then the mapping defined by $\mathbb{R}^{2dn}\rightarrow \mathbb{R}^{2dn}$, $(\mathbf{x}_i,\dot {\mathbf{x}}_i)_{i\in \{1,\ldots ,n\}}\mapsto (\mathbf{x}_i,\mathbf{p}_i)_{i\in\{1,\ldots ,n\}}$ is locally one to one. So we may express $T(\dot {\mathbf{x}})$, $U(\mathbf{x})$ as functions of $\mathbf{x}$ and $\mathbf{p}=(\mathbf{p}_1,\ldots ,\mathbf{p}_n)\in \mathcal{C}^0([t_0,t_f],(\mathbb{R}^d)^n)$ to
obtain $T-U=\mathcal{H}_c(\mathbf{x},\mathbf{p})$. Similarly, in the discrete setting, the convenient coordinates of the momenta are $p_{ik}=\frac{\partial \mathcal{L}_d}{\partial \Box {x}_{ik}}$ and we obtain accordingly the Hamiltonian $\mathcal{H}_d(\mathbf{x},\mathbf{p})$. Due to (\ref{LUTH1}) and (\ref{LUTH2}), we note that the two hamiltonian functions are formally the same function that we denote $\mathcal{H}(\mathbf{x},%
\mathbf{p})$. As a well known result, the equations (\ref{celuniv}) are equivalent to the systems
\begin{equation}
\dot p_{ik}=-\frac{\partial \mathcal{H}}{\partial x_{ik}},~\dot x_{ik}=\frac{\partial \mathcal{H}}{\partial p_{ik}},  
\label{eqHamcont}
\end{equation}
We may easily state an analogue of (\ref{eqHamcont}) in the discrete setting.
\begin{proposition}
The equations (\ref{mcelter}) are equivalent to the systems of Hamilton's equations 
\begin{equation}
\Box ^{\star }{p}_{ik}=\frac{\partial \mathcal{H}}{\partial x_{ik}},~\Box {x}_{ik}=\frac{\partial \mathcal{H}}{\partial p_{ik}}.  
\label{eqHamdisc}
\end{equation}
\end{proposition}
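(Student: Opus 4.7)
The plan is to mimic the classical Legendre-transform proof in the discrete setting, exploiting the fact that the kinetic term $T$ appearing in (\ref{LUTH2}) is a positive-definite quadratic form in $\Box\mathbf{x}$ only, while $U$ depends only on $\mathbf{x}$. Because of this separation, the Legendre transformation takes an explicit and elementary form, so the real content of the argument reduces to matching partial derivatives on either side. The delayed, non-local nature of $\Box^\star\Box$ visible in (\ref{mchomat2}) is already contained in the left-hand side of (\ref{mcelter}) and plays no role in the present algebraic correspondence.

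Concretely, I would proceed in three short steps. First, I would compute the discrete momenta directly from $\mathcal{L}_d$: since $\partial U/\partial\Box x_{ik}=0$, one obtains $p_{ik}=\partial\mathcal{L}_d/\partial\Box x_{ik}=\partial T/\partial\Box x_{ik}=m_i\Box x_{ik}$. Invertibility of the map $(\mathbf{x},\Box\mathbf{x})\mapsto(\mathbf{x},\mathbf{p})$ follows from the assumption on the hessian of $T$ recalled just before the proposition, and yields $\mathcal{H}(\mathbf{x},\mathbf{p})=\sum_{i,k}p_{ik}^2/(2m_i)-U(\mathbf{x})$. Second, differentiating in $p_{ik}$ gives $\partial\mathcal{H}/\partial p_{ik}=p_{ik}/m_i=\Box x_{ik}$, which is exactly the second equation of (\ref{eqHamdisc}); this is a tautological consequence of the definition of the momenta. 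Third, since $T$ does not depend on $\mathbf{x}$ in either formulation, $\partial\mathcal{L}_d/\partial x_{ik}=\partial U/\partial x_{ik}=-\partial\mathcal{H}/\partial x_{ik}$, so combining with the discrete Euler-Lagrange equation (\ref{intrinsic}) yields $\Box^\star p_{ik}=\partial\mathcal{H}/\partial x_{ik}$, the first equation of (\ref{eqHamdisc}).

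The converse direction is obtained by reading the same chain backwards: the second equation of (\ref{eqHamdisc}) forces $p_{ik}=m_i\Box x_{ik}$, and substituting into the first recovers $m_i\Box^\star\Box x_{ik}=-\partial U/\partial x_{ik}$, which is precisely (\ref{mcelter}) after expanding $\partial U/\partial x_{ik}$ via (\ref{TU}). I do not expect any genuine obstacle; the only point worth a line of justification is the invertibility of the discrete Legendre map, which is granted by the hypothesis on $T$. In this sense the proposition is a purely formal transcription of the classical statement, with $\dot{\mathbf{x}}$ replaced by $\Box\mathbf{x}$ and $d/dt$ replaced by $-\Box^\star$, relying only on the linearity of $\Box$ and $\Box^\star$.
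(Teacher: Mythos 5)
Your proposal is correct and follows essentially the same route as the paper's own proof: compute $p_{ik}=m_i\Box x_{ik}$ from $\mathcal{L}_d$, rewrite the discrete Euler--Lagrange equation (\ref{intrinsic}) as $\Box^\star p_{ik}=-\partial U/\partial x_{ik}=\partial\mathcal{H}/\partial x_{ik}$, and note that $\partial\mathcal{H}/\partial p_{ik}=p_{ik}/m_i=\Box x_{ik}$. The only difference is that you spell out the converse and the invertibility of the discrete Legendre map, which the paper dismisses as easy or has already assumed before the statement.
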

\begin{proof}
In this fairly simple setting we have $p_{ik}=\frac{\partial \mathcal{L}_d}{\partial \Box{x}_{ik}}%
=m_i\Box x_{ik}$. Thus, equation (\ref{intrinsic}) may be rewritten as $\Box^\star p_{ik}=-\frac{\partial \mathcal{L}_d}{\partial
x_{ik}}=-\frac{\partial U}{\partial x_{ik}}=\frac{\partial \mathcal{H}}{%
\partial x_{ik}}$. Next, since $T=\frac12\sum_{i=1}^n\sum_{k=1}^d\frac{1}{m_i}p_{ik}^2$, it is obvious that $\frac{\partial \mathcal{H}}{\partial p_{ik}}=\frac{\partial T}{\partial p_{ik}}=\frac1{m_i}p_{ik}=\Box x_{ik}$. Thus, we have proved that (\ref{mcelter}) implies (\ref{eqHamdisc}) and the
converse is easy.
\end{proof}

We shall see in Remark \ref{rem41} that the Hamilton's equations are not covariant and highlight the restrictive assumptions (\ref{LUTH1}) and (\ref{LUTH2}). The computation of integrals of motion is done in Subsection \ref{sec53} in the particular case of central configurations.

\section{The four functional operators $V_c,V_s,W_c,W_s$.}

In this section we introduce four continuous linear operators between the euclidean function space $\mathcal{C}^0([t_0,t_f],\mathbb{R}^d)$ and the function space of which the elements are the piecewise continuous functions vanishing outside $[t_0-N\varepsilon,t_f+N\varepsilon]$. The first space is equipped with the ordinary scalar product, i.e. $\displaystyle \langle\mathbf{f},\mathbf{g}\rangle_0=\int_{t_0}^{t_f}\langle\mathbf{f}(t),\mathbf{g}(t)\rangle dt$, while the second one is endowed with the same scalar product except for the domain of integration which is $[t_0-N\varepsilon,t_f+N\varepsilon]$. As usual, we use the notation $\star$ for denoting the adjoint of an operator.

\begin{proposition}
There exist four uniquely well-defined operators $V_c,V_s,W_c,W_s$ such that
\begin{eqnarray}
\Box (\mathbf{f}(t)\cos (\omega t)) & = & V_c(\mathbf{f})(t)\cos (\omega t)-V_s(\mathbf{f})(t)\sin (\omega t),\label{op1}\\
\Box (\mathbf{f}(t)\sin (\omega t)) & = &V_s(\mathbf{f})(t)\cos (\omega t)+V_c(\mathbf{f})(t)\sin (\omega t),\label{op2}\\
\Box ^{\star }\Box (\mathbf{f}(t)\cos (\omega t)) & = & W_c(\mathbf{f})(t)\cos (\omega t)-W_s(\mathbf{f})(t)\sin (\omega t),\label{op3}\\
\Box ^{\star }\Box (\mathbf{f}(t)\sin (\omega t)) & = & W_s(\mathbf{f})(t)\cos (\omega
t)+W_c(\mathbf{f})(t)\sin (\omega t).\label{op4}
\end{eqnarray}
for all mappings $\mathbf{f}:[t_0,t_f]\rightarrow\mathbb{R}^d$. The four operators $V_c,V_s,W_c,W_s$ are connected through the formulas
\begin{equation}
W_c=V_c^{\star }V_c+V_s^{\star }V_s=W_c^\star,~~~~W_s=V_c^{\star }V_s-V_s^{\star }V_c=-W_s^\star.
\label{op5}
\end{equation}
Lastly, let $(\gamma _i)\in \mathbb{C}^{2N+1}$ be such that $\sum_\ell \gamma _\ell=0$ and 
$\sum_\ell \ell \gamma _\ell =1$. Then, for all $\mathbf{f}\in \mathcal{C}^2([t_0,t_f],\mathbb{R}^d)$,
\begin{equation}
\displaystyle \lim_{\varepsilon \to 0}V_c(\mathbf{f})=\dot{\mathbf{f}} \mbox{ and }
\displaystyle \lim_{\varepsilon \to 0}V_s(\mathbf{f})=\omega\mathbf{f},
\label{convV}
\end{equation}
\begin{equation}
\displaystyle \lim_{\varepsilon \to 0}W_c(\mathbf{f})=\omega ^2\mathbf{f}-\ddot{\mathbf{f}}\mbox{ and }
\displaystyle \lim_{\varepsilon \to 0}W_s(\mathbf{f})=-2\omega \dot{\mathbf{f}}
\label{convW}
\end{equation}
locally uniformly in $]t_0,t_f[$.
\label{prop31}
\end{proposition}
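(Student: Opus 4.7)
My plan is to extract $V_c, V_s$ directly from the sum defining $\Box$ via the angle-addition formula, then obtain $W_c, W_s$ together with the identities (\ref{op5}) by composition with $\Box^{\star}$, and finally verify the limits (\ref{convV})--(\ref{convW}) by Taylor expansion combined with the moment conditions (\ref{convcond}).

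First I would substitute $\mathbf{u}(t) = \mathbf{f}(t)\cos(\omega t)$ into (\ref{ourbox}) and apply $\cos(\omega(t+\ell\varepsilon)) = \cos(\omega t)\cos(\omega\ell\varepsilon) - \sin(\omega t)\sin(\omega\ell\varepsilon)$. This splits $\Box(\mathbf{f}(t)\cos(\omega t))$ into a $\cos(\omega t)$-component and a $\sin(\omega t)$-component, which forces the explicit formulas $V_c(\mathbf{f})(t) = \frac{1}{\varepsilon}\sum_\ell \gamma_\ell \cos(\omega\ell\varepsilon)\chi(t+\ell\varepsilon)\mathbf{f}(t+\ell\varepsilon)$ and $V_s(\mathbf{f})(t) = \frac{1}{\varepsilon}\sum_\ell \gamma_\ell \sin(\omega\ell\varepsilon)\chi(t+\ell\varepsilon)\mathbf{f}(t+\ell\varepsilon)$. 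The same expansion applied to $\mathbf{f}(t)\sin(\omega t)$ reproduces (\ref{op2}) with the very same operators. Uniqueness of the pair $(V_c(\mathbf{f}),V_s(\mathbf{f}))$ is immediate since $\cos(\omega t)$ and $\sin(\omega t)$ are linearly independent on any nonempty open subinterval of $[t_0,t_f]$.

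Next I would apply $\Box^{\star}$ to (\ref{op1}). Running the angle-addition computation for $\Box^{\star}$, which amounts to replacing $\ell\varepsilon$ by $-\ell\varepsilon$ inside the trigonometric arguments, yields $\Box^{\star}(g(t)\cos(\omega t)) = V_c^{\star}(g)(t)\cos(\omega t) + V_s^{\star}(g)(t)\sin(\omega t)$ and $\Box^{\star}(g(t)\sin(\omega t)) = -V_s^{\star}(g)(t)\cos(\omega t) + V_c^{\star}(g)(t)\sin(\omega t)$. Substituting these into $\Box^{\star}$ of (\ref{op1}) and collecting by $\cos(\omega t)$ and $\sin(\omega t)$, then comparing with (\ref{op3}), identifies $W_c = V_c^{\star}V_c + V_s^{\star}V_s$ and $W_s = V_c^{\star}V_s - V_s^{\star}V_c$; the symmetries $W_c^{\star} = W_c$ and $W_s^{\star} = -W_s$ then follow at once by taking adjoints.

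For the limits I would restrict $t$ to a compact subinterval of $]t_0,t_f[$ so that for $\varepsilon$ small enough every $\chi$-factor in the relevant sums equals $1$. Inserting $\cos(\omega\ell\varepsilon) = 1 + O(\varepsilon^2)$, $\sin(\omega\ell\varepsilon) = \omega\ell\varepsilon + O(\varepsilon^3)$, and $\mathbf{f}(t+\ell\varepsilon) = \mathbf{f}(t) + \ell\varepsilon\dot{\mathbf{f}}(t) + O(\varepsilon^2)$ into the explicit formulas for $V_c, V_s$, and using (\ref{convcond}), yields (\ref{convV}). For (\ref{convW}) I would apply the double-sum formula (\ref{mchomat2}) to $\mathbf{f}(s)\cos(\omega s)$ and $\mathbf{f}(s)\sin(\omega s)$: after Taylor expansion to second order, the limits reduce to the three moment sums $\sum_{\ell,j}\gamma_{\ell+j}\gamma_j\ell^p$ for $p\in\{0,1,2\}$, which, via the substitution $m = \ell + j$ and (\ref{convcond}), evaluate respectively to $0$, $0$, and $-2$. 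The main obstacle is the bookkeeping forced by the prefactor $1/\varepsilon^2$: the Taylor expansions of $\mathbf{f}$ and of the trigonometric factors must be carried to exactly second order so that the potentially divergent $1/\varepsilon^2$ and $1/\varepsilon$ contributions cancel through the vanishing of the $p=0,1$ moments, leaving precisely $\omega^2\mathbf{f} - \ddot{\mathbf{f}}$ and $-2\omega\dot{\mathbf{f}}$ as surviving terms. Boundary effects from $\chi$ are absorbed into the restriction to compact subintervals, which is exactly what makes the convergence locally uniform rather than uniform on all of $]t_0,t_f[$.
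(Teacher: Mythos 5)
Your construction of $V_c,V_s$ by the angle-addition formula, the identification $W_c=V_c^{\star}V_c+V_s^{\star}V_s$, $W_s=V_c^{\star}V_s-V_s^{\star}V_c$ by composing with $\Box^{\star}$ (whose coefficients are the reversed $\gamma$'s), and the resulting symmetries in (\ref{op5}) coincide with the paper's treatment of the algebraic part, and your computations there are correct. Where you genuinely diverge is the convergence part. The paper does not Taylor-expand at all: it observes that the convergence result of \cite[Proposition 2.3]{RS1}, namely $\Box\mathbf{u}\to\dot{\mathbf{u}}$ and $-\Box^{\star}\mathbf{u}\to\dot{\mathbf{u}}$ under (\ref{convcond}) (extended to piecewise-$\mathcal{C}^2$ functions, since $\Box\mathbf{u}$ is only piecewise smooth because of $\chi$), can be applied to the products $\mathbf{f}\cos(\omega t)$ and $\mathbf{f}\sin(\omega t)$, and then transfers the limit to $(V_c,V_s)$ and $(W_c,W_s)$ through the invertible rotation system (\ref{convsys}). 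You instead prove the limits from scratch: explicit formulas, $\chi\equiv 1$ on compact subintervals for small $\varepsilon$, second-order Taylor expansion, and the moment identities $\sum_{\ell,j}\gamma_{\ell+j}\gamma_j\ell^{p}=0,0,-2$ for $p=0,1,2$, which indeed follow from the substitution $m=\ell+j$ (the constraint $|\ell|\leq 2N$ is then automatic, so the double sum factorizes) together with (\ref{convcond}). Your route is self-contained and in effect re-proves the special case of the cited lemma that is needed, at the price of the $1/\varepsilon^2$ bookkeeping; the paper's route is shorter, reuses the known lemma, and gets both limits in (\ref{convW}) simultaneously from the single matrix identity, but must justify the extension of that lemma to piecewise-$\mathcal{C}^2$ functions. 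Both arguments give locally uniform convergence for the same reason you state: uniform Taylor remainders on $[t_0,t_f]$ and the disappearance of boundary $\chi$-effects on $[t_0+\delta,t_f-\delta]$.

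One small imprecision: uniqueness of the pair $(V_c(\mathbf{f}),V_s(\mathbf{f}))$ does not follow from (\ref{op1}) alone via linear independence of $\cos(\omega t)$ and $\sin(\omega t)$, because the coefficients are $t$-dependent functions; you need (\ref{op1}) and (\ref{op2}) together, where the $2\times 2$ rotation matrix with determinant $1$ pins down the pair pointwise (and likewise for $(W_c,W_s)$ from (\ref{op3})--(\ref{op4})). This is exactly the invertibility fact you already use later for the limits, so the fix is immediate.
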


\begin{proof}
Let us introduce the four operators
\begin{center}
$\displaystyle V_c(\mathbf{f}) =\frac{1}{\varepsilon}\sum_{|j|\leq N}\gamma _j \cos (\omega j\varepsilon )\chi (t+j\varepsilon )\mathbf{f}(t+j\varepsilon )$,\\
$\displaystyle V_s(\mathbf{f}) =\frac{1}{\varepsilon}\sum_{|j|\leq N}\gamma _j \sin (\omega
j\varepsilon )\chi (t+j\varepsilon )\mathbf{f}(t+j\varepsilon )$,
\end{center}
\begin{center}
$\displaystyle W_c(\mathbf{f})=\frac 1{\varepsilon ^2}\sum_{\tiny
\begin{array}[t]{c}
|\ell|\leq 2N \\ 
|j|\leq N \\ 
|\ell +j|\leq N
\end{array}
}\hspace{-0.2cm}\gamma _{\ell +j}\gamma _j\chi (t-j\varepsilon )\chi (t+\ell
\varepsilon )\cos (\omega\ell \varepsilon)\mathbf{f}(t+\ell \varepsilon )$,
$\displaystyle W_s(\mathbf{f})=\frac 1{\varepsilon ^2}\sum_{\tiny
\begin{array}[t]{c}
|\ell|\leq 2N \\ 
|j|\leq N \\ 
|\ell +j|\leq N
\end{array}
}\hspace{-0.2cm}\gamma _{\ell +j}\gamma _j\chi (t-j\varepsilon )\chi (t+\ell
\varepsilon )\sin (\omega\ell \varepsilon)\mathbf{f}(t+\ell \varepsilon )$.
\end{center}
Straightforward computations show that equations (\ref{op1}) to (\ref{op4}) hold. Let us recall from \cite[Lemma 2.1]{RS2} that the adjoint of an operator $\Box$ defined by (\ref{ourbox}) is obtained by reversing its coefficients $(\gamma_j)_{j}$ to $(\gamma_{-j})_j$. Substituting $\Box^\star$ for $\Box$ in (\ref{op1}) and (\ref{op2}) implies two new operators $\tilde{V}_c$ and $\tilde{V}_s$, whose coefficients are the sequences $(\gamma_{-j}\cos(\omega j\varepsilon))_j$ and $(\gamma_{-j}\sin(\omega j\varepsilon))_j$ respectively, that is to say the reversed sequences of coefficients of the operators $V_c^\star$ and $-V_s^\star$ respectively. It follows that the two formulas $W_c=V_c^{\star }V_c+V_s^{\star }V_s$ and $W_s=V_c^{\star }V_s-V_s^{\star }V_c$ hold, as consequences of (\ref{op1}) and (\ref{op2}) when using the operators $\Box$ and $\Box^\star$ and mentioning the unicity of coefficients in (\ref{op3}) and (\ref{op4}). Let us remark that formulas (\ref{op5}) imply that $W_c$ is symmetric and $W_s$ is skew-symmetric.

An inspection of the proof given in \cite[Proposition 2.3]{RS1} shows that the result of convergence (\ref{convbox}) extends to $\mathcal{C}^2$-piecewise functions $\mathbf{u}$. This being observed, we may deduce the two last results of the property. Since the proofs are similar, we focus especially on (\ref{convW}). Equations (\ref{op3}) and (\ref{op4}) may be rewritten as
\[
\begin{pmatrix}
\cos(\omega t) & -\sin(\omega t)\\
\sin(\omega t) & \cos(\omega t)
\end{pmatrix}\begin{pmatrix}
W_c(\mathbf{f})\\
W_s(\mathbf{f})
\end{pmatrix}=%
\begin{pmatrix}
\Box^\star\Box (\mathbf{f}\cos(\omega t))\\
\Box^\star\Box (\mathbf{f}\sin(\omega t))
\end{pmatrix}
\]
and, as a consequence, we get the identity
\begin{equation}
{\small %
\begin{pmatrix}
\cos(\omega t) & -\sin(\omega t)\\
\sin(\omega t) & \cos(\omega t)
\end{pmatrix}%
\begin{pmatrix}
(\ddot{\mathbf{f}}-\omega^2\mathbf{f})+W_c(\mathbf{f})\\
2\omega \dot{\mathbf{f}}+W_s(\mathbf{f})
\end{pmatrix}=%
\begin{pmatrix}
(\mathbf{f}\cos(\omega t))''+\Box^\star\Box (\mathbf{f}\cos(\omega t))\\
(\mathbf{f}\sin(\omega t))''+\Box^\star\Box (\mathbf{f}\sin(\omega t))
\end{pmatrix}. }
\label{convsys}
\end{equation}
The matrix of this system being invertible, the r.h.s of (\ref{convsys}) tends to 0 as $\varepsilon$ tends to $0$ if and only if its l.h.s tends to 0, locally uniformly  on each interval of the shape $[t_0+\delta ,t_f-\delta ]$, for all $\delta >0$. By using the formula (\ref{ourbox}) in the interval $[t_0+N\varepsilon ,t_f-N\varepsilon ]$, we see
that $\Box \mathbf{f}(t)=\frac{\gamma _{-N}}\varepsilon \mathbf{f}(t-N\varepsilon )+\ldots+\frac{\gamma _N}\varepsilon \mathbf{f}(t+N\varepsilon )$. Since $\mathbf{f}$ is $C^2$ in $[t_0,t_f]$, $\Box \mathbf{f}$ is $\mathcal{C}^2$ in $[t_0+N\varepsilon,t_f-N\varepsilon ]$ and we may apply \cite[Proposition 2.3]{RS1} to state first that $-\Box ^{\star }\Box \mathbf{f}$ tends to $\ddot{\mathbf{f}}$ as $\varepsilon $ tends to 0 and next, help to (\ref{convsys}), that (\ref{convV}) and (\ref{convW}) hold.
\end{proof}

\begin{remark}\rm
We may interprete the four identities (\ref{op1}) to (\ref{op4}) as specialized Leibniz formulas of order 1 and 2. In \cite[Theorem 3.1]{RS1}, we already expressed the remainder $\Box(\mathbf{f}\mathbf{g})-\mathbf{f}\Box\mathbf{g}-\mathbf{g}\Box\mathbf{f}$ in a generalized Leibniz formula for a specific class of operators $\Box$.
\end{remark}

\begin{remark}\rm
We notice that the operators $V_c$ and $V_s$, whose coefficients are the sequences $(\gamma_j\cos(\omega j\varepsilon))_j$ and $(\gamma_j\sin(\omega j\varepsilon))_j$ respectively, are of the shape (\ref{ourbox}) if we do not consider the fact that these coefficients depend on $\varepsilon$. In contrast, this is not the case for $W_c$ and $W_s$. 
\end{remark}

\section{Equations of motion in a rotating frame}

The aim of this section is to provide, when it is possible, the classical and discrete equations of motion in a rotating frame by using the lagrangian and hamiltonian formalisms and the Legendre transform. From now on, we drop $t$ from the following dynamic variables since it is clear. 

\subsection{Euler-Lagrange equations in the rotating frame}

To begin with, we shall suppose that the cartesian coordinates $(x_{ik}(t))_{i,k}$ in the classical and discrete settings are of the shape (\ref{homographicsol}), expressed respectively through the $2\times(2n)$ functions $a_i(t),b_i(t)$ and $A_i(t,\varepsilon),B_i(t,\varepsilon)$, which may be thought as perturbative variables around the relative equilibria. Therefore, the distances between the points lying in the plane $(x_{i1},x_{i2})$, in the classical and discrete settings, are respectively  equal to
\begin{center}
$\displaystyle r_{ij}^2=(a_i-a_j)^2+(b_i-b_j)^2$ and $R_{ij}^2=(A_i-A_j)^2+(B_i-B_j)^2$.
\end{center} 

Let us provide now the perturbative Euler-Lagrange equations in the rotating frame. In the classical setting, we use (\ref{homographicsol}) to compute $\dot x_{i1},\dot
x_{i2},\ddot x_{i1},\ddot x_{i2}$ and we plug these functions in (\ref{celuniv}). Help to suitable linear combinations, we obtain 
\begin{equation}
\begin{array}{l}
\displaystyle m_i(\ddot a_i-2\omega \dot b_i-\omega ^2a_i)=\sum_{j\neq i}\frac{%
{f}_{ij}'(r_{ij})}{r_{ij}}(a_i-a_j),\\
\displaystyle m_i(\ddot b_i+2\omega \dot a_i-\omega ^2b_i)=\sum_{j\neq i}\frac{%
{f}_{ij}'(r_{ij})}{r_{ij}}(b_i-b_j).
\end{array}
\label{CELperturb}
\end{equation}
In the discrete setting, we use (\ref{mcelter}), (\ref{op3}) and (\ref{op4}) to obtain 
\begin{center}
$\Box^\star\Box x_{i1}(t)=(W_c(A_i)-W_s(B_i))(t)\cos (\omega t)-(W_s(A_i)+W_c(B_i))(t)\sin (\omega t)$,\\
$\Box^\star\Box x_{i2}(t)=(W_s(A_i)+W_c(B_i))(t)\cos (\omega t)+(W_c(A_i)-W_s(B_i))(t)\sin (\omega t)$.
\end{center}
Help to the same linear combinations than in the classical case, we obtain the following equations of motion 
\begin{equation}
\begin{array}{l}
\displaystyle -m_i(W_c(A_i)-W_s(B_i))=\sum_{j\neq i}\frac{{f}_{ij}'(R_{ij})}{R_{ij}}(A_i-A_j),\\
\displaystyle -m_i(W_s(A_i)+W_c(B_i))=\sum_{j\neq i}\frac{{f}_{ij}'(R_{ij})}{R_{ij}}(B_i-B_j).  
\end{array}
\label{DELperturb}
\end{equation}
As a corollary of Proposition \ref{prop31}, we readily see that the operators in the l.h.s. of (\ref{DELperturb}) converge to the operators of the l.h.s. of (\ref{CELperturb}), as $\varepsilon$ tends to 0, provided the coefficients $(\gamma_\ell)$ satisfy the assumptions of the proposition. Moreover, if we consider any family $(A_i(\varepsilon,t),B_i(\varepsilon,t))_{1\leq i\leq n}$ of $2n$ functions from $]-\varepsilon_0,\varepsilon_0[\times[t_0,t_f]$ to $\mathbb{R}$, and if we set $a_i(t)=A_i(0,t)$ and $b_i(t)=B_i(0,t)$, then the formulas (\ref{convV}) and (\ref{convW}) imply that both sides of each equation in (\ref{DELperturb}) converge to the respective quantities in (\ref{CELperturb}).

\subsection{The Legendre transform in the classical and discrete settings}

Let us construct first the canonical coordinates in the rotating frame. Obviously, the decompositions in the galilean frame (\ref{LUTH1}) and (\ref{LUTH2}) do not longer hold when working in a rotating frame since there appears some inertial forces and effects.\\
Let us consider the classical case. As in classical textbooks (for instance \cite[p.~266]
{BP}), we choose as coordinates $a_i(t),b_i(t)$ and as momenta 
\begin{equation}
c_i(t)=m_i(\dot {a_i}(t)-\omega b_i(t)),~~~~d_i(t)=m_i(\dot{b_i}(t)+\omega a_i(t)).
\label{cd}
\end{equation} 
By using the derivatives of $x_{i1}(t)$ and $x_{i2}(t)$ obtained from (\ref{homographicsol}) and the expression of $T$ given in (\ref{TU}), we easily find 
\begin{equation*}
\mathcal{L}_c=\displaystyle \frac 12\sum_{i=1}^nm_i[(\dot {a_i}-\omega b_i)^2+(\dot {b_i}+\omega a_i)^2]+U(\mathbf{x})=
\displaystyle \frac{1}{2}\sum_{i=1}^n\frac{1}{m_i}(c_i^2+d_i^2)+U(\mathbf{x}).
\end{equation*}
The Lagrangian $\mathcal{L}_c$ depends naturally on the variables $a_i,b_i,\dot{a}_i,\dot{b}_i$ while the hamiltonian function $\mathcal{H}_c$, obtained through the Legendre transform of $\mathcal{L}_c$, depends essentially on $a_i,b_i,c_i,d_i$. Its value is given by
\begin{center}
$\mathcal{H}_c=\displaystyle \sum_{i=1}^n(\dot{a_i}\frac{\partial 
\mathcal{L}_c}{\partial \dot{a_i}}+\dot{b_i}\frac{\partial \mathcal{L}_c}{\partial \dot{b_i}})-\mathcal{L}_c=\displaystyle \frac{1}{2}\sum_{i=1}^n\frac{1}{m_i}(c_i^2+d_i^2)-\omega\sum_{i=1}^n(a_id_i-b_ic_i)-U(\mathbf{x})$.
\end{center}
By using (\ref{CELperturb}) and (\ref{cd}), we may easily show that the partial derivatives of $\mathcal{H}_c$ w.r.t. $c_i,d_i,a_i,b_i$ are respectively equal to
\begin{equation}
\dot {a_i}=\frac{\partial \mathcal{H}_c}{\partial c_i},~\dot {b_i}=\frac{%
\partial \mathcal{H}_c}{\partial d_i},~~\dot {c_i}=-\frac{\partial \mathcal{H%
}_c}{\partial a_i},~\dot {d_i}=-\frac{\partial \mathcal{H}_c}{\partial b_i}.
\label{CELhamiltonien}
\end{equation}
We consider now the discrete case. Plugging the various equations (\ref{homographicsol}) in (\ref{op1}) and (\ref{op2}), we get
\begin{center}
$\Box x_{i1}(t)=(V_c(A_i)-V_s(B_i))(t)\cos (\omega t)-(V_s(A_i)+V_c(B_i))(t)\sin (\omega t)$,
$\Box x_{i2}(t)=(V_s(A_i)+V_c(B_i))(t)\cos (\omega t)+(V_c(A_i)-V_s(B_i))(t)\sin (\omega t)$.
\end{center}
Now, squaring, expanding and summing, we find that the discrete Lagrangian defined by (\ref{LUTH2}) is given as follows 
\begin{equation*}
\mathcal{L}_d=\displaystyle \frac
12\sum_{i=1}^nm_i[(V_c(A_i)-V_s(B_i))^2+(V_s(A_i)+V_c(B_i))^2]+U(\mathbf{x}).
\end{equation*}
Because of the formal similarity between $\mathcal{L}_c$ and $\mathcal{L}_d$, we choose as canonical coordinates $A_i(t),B_i(t)$, and as momenta 
\begin{equation}
C_i(t)=m_i(V_c(A_i)(t)-V_s(B_i)(t)),~~~~D_i(t)=m_i(V_s(A_i)(t)+V_c(B_i)(t)).
\label{CD}
\end{equation}
Hence, $\mathcal{L}_d$ may be rewritten as
\begin{center}
$\mathcal{L}_d=\displaystyle \frac{1}{2}\sum_{i=1}^n\frac{1}{m_i}(C_i^2+D_i^2)+U(\mathbf{x})$.
\end{center}
The Lagrangian $\mathcal{L}_d$ depends naturally on the variables $A_i,B_i$, but also on $V_c(A_i)$, $V_c(B_i)$, $V_s(A_i)$, $V_s(B_i)$. In contrast, $\mathcal{L}_d$ does not depend naturally on the variables $C_i$ and $D_i$ introduced a posteriori nor on the variables $\Box A_i$ and $\Box B_i$. This is a clear indication of the non-covariance of the discretization procedure when dealing with inertial frames. At this point, it is essential to note that the Legendre transform may not be generalized in a convenient way to the discrete case. A first reason for this, is that the derivation of $\mathcal{L}_d$ w.r.t. the variables $\Box A_i$ and $\Box B_i$ is a nonsense. A second deeper one is that the discrete Euler-Lagrange equations are not covariant w.r.t. change of variables. A last reason is that the Hamilton's principle is not covariant in the discrete setting.

\subsection{Discrete hamiltonian equations in the rotating frame}

To overcome the difficulty mentioned previously, we introduce by analogy to the classical case the discrete hamiltonian function
\begin{center}
$\displaystyle\mathcal{H}_d=\frac{1}{2}\sum_{i=1}^n\frac{1}{m_i}(C_i^2+D_i^2)-\omega\sum_{i=1}^n(A_iD_i-B_iC_i)-U(\mathbf{x})$.
\end{center}
This construction has five interesting features. The first one is obviously that $\mathcal{H}_d$ depends in an algebraic way of the variables $A_i$, $B_i$, $C_i$, $D_i$, and not of some additional derivative operators acting on the previous variables. Next, if $\omega=0$, we recover (\ref{LUTH2}).
Another important feature is the possibility to provide the Hamilton-Jacobi partial differential equation in the discrete calculus of variation expressed as
\begin{center}
$\displaystyle\frac{1}{2}\sum_{i=1}^n\frac{1}{m_i}\left(\left(\frac{\partial S}{\partial A_i}\right)^2+\left(\frac{\partial S}{\partial B_i}\right)^2\right)-\omega\sum_{i=1}^n\left(A_i\frac{\partial S}{\partial B_i}-B_i\frac{\partial S}{\partial B_i}\right)-U(\mathbf{x})=cst$,
\end{center}
for the unknown action $S=S((A_i,B_i)_i)$. As one knows, this equation is of a crucial importance when constructing variational integrators for approximating the solutions of the equations of motion. The last two properties are given in the two following results.

\begin{proposition}
The equations of motion in lagrangian form (\ref {DELperturb}) are equivalent to 
\begin{eqnarray}
V_c(A_i)-V_s(B_i)+\omega B_i=\frac{\partial \mathcal{H}_d}{\partial C_i}, \label{DELhamiltonien1}\\
V_s(A_i)+V_c(B_i)-\omega A_i=\frac{\partial \mathcal{H}_d}{\partial D_i}, \label{DELhamiltonien2}\\
m_i(W_c(A_i)-W_s(B_i))-\omega D_i=\frac{\partial \mathcal{H}_d}{\partial A_i}%
, \label{DELhamiltonien3}\\
m_i(W_s(A_i)+W_c(B_i))+\omega C_i=\frac{\partial \mathcal{H}_d}{\partial B_i}%
.  \label{DELhamiltonien4}
\end{eqnarray}
Moreover, let us suppose that the operator $\Box$ satisfies (\ref{ourbox}) and (\ref{convcond}). Then for all $\varepsilon_0>0$ and for all set of $4n$ functions $(A_i,B_i,C_i,D_i)_{1,\leq i\leq n}:]-\varepsilon_0,\varepsilon_0[\times[t_0,t_f]\rightarrow\mathbb{R}$, continuous w.r.t. $(\varepsilon,t)$ and $\mathcal{C}^2$ w.r.t. $t$, the Lagrangian $\mathcal{L}_d$, the Hamiltonian $\mathcal{H}_d$ and the four
Hamilton's equations (\ref{DELhamiltonien1}) to (\ref{DELhamiltonien4}) converge locally uniformly in $]t_0,t_f[$ respectively to the Lagrangian $\mathcal{L}_c$, the Hamiltonian $\mathcal{H}_c$ and the four Hamilton's equations (\ref{CELhamiltonien}) as $\varepsilon$ tends to 0.
\end{proposition}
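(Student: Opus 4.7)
For the equivalence between (\ref{DELperturb}) and (\ref{DELhamiltonien1})--(\ref{DELhamiltonien4}), I would compute the four partial derivatives of $\mathcal{H}_d$, which produce respectively $C_i/m_i+\omega B_i$, $D_i/m_i-\omega A_i$, $-\omega D_i-\partial U/\partial A_i$ and $\omega C_i-\partial U/\partial B_i$, and then proceed as follows. The equations (\ref{DELhamiltonien1}) and (\ref{DELhamiltonien2}) are, after multiplication by $m_i$, precisely the defining relations (\ref{CD}) of the momenta, so they are automatic once $C_i,D_i$ have been introduced. In (\ref{DELhamiltonien3}) and (\ref{DELhamiltonien4}) the terms $\omega D_i$ and $\omega C_i$ cancel those arising in the derivatives of $\mathcal{H}_d$, leaving $-m_i(W_c(A_i)-W_s(B_i))=\partial U/\partial A_i$ and its $B_i$ analogue. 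Applying the chain rule to $U=\sum_{i<j}f_{ij}(R_{ij})$ with $R_{ij}^2=(A_i-A_j)^2+(B_i-B_j)^2$ reproduces exactly the right-hand sides of (\ref{DELperturb}); the converse direction is the same argument run backwards.

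For the convergence, setting $(a_i,b_i,c_i,d_i)=(A_i,B_i,C_i,D_i)|_{\varepsilon=0}$, I would invoke the limits (\ref{convV}) and (\ref{convW}) from Proposition \ref{prop31} applied to the $\varepsilon$-families $A_i(\varepsilon,\cdot)$ and $B_i(\varepsilon,\cdot)$. Combined with the joint continuity of $C_i,D_i$ in $(\varepsilon,t)$, the identities (\ref{CD}) pass to the limit and yield the classical momenta (\ref{cd}); since $\mathcal{L}_d$ and $\mathcal{H}_d$ are polynomials in $(A_i,B_i,C_i,D_i)$, their convergence to $\mathcal{L}_c$ and $\mathcal{H}_c$ then follows immediately. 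For the four Hamilton's equations, the left-hand sides of (\ref{DELhamiltonien1}) and (\ref{DELhamiltonien2}) collapse to $\dot{a}_i$ and $\dot{b}_i$ after the $\omega B_i$ and $-\omega A_i$ terms cancel the $-V_s(B_i)$ and $V_s(A_i)$ limits, thus matching $\partial\mathcal{H}_c/\partial c_i$ and $\partial\mathcal{H}_c/\partial d_i$. For (\ref{DELhamiltonien3}), the limit of $m_i(W_c(A_i)-W_s(B_i))-\omega D_i$ reduces after a similar cancellation to $-m_i(\ddot{a}_i-\omega\dot{b}_i)=-\dot{c}_i$, giving back $\dot{c}_i=-\partial\mathcal{H}_c/\partial a_i$; (\ref{DELhamiltonien4}) is handled identically.

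The only real obstacle is that Proposition \ref{prop31} is stated for a fixed $\mathcal{C}^2$ function, whereas here the argument $A_i(\varepsilon,\cdot)$ itself depends on $\varepsilon$. Joint continuity in $(\varepsilon,t)$ and $\mathcal{C}^2$-regularity in $t$ provide uniform second-derivative bounds on a compact $\varepsilon$-neighborhood of $0$, which make the Taylor-remainder estimates underlying the proof of the proposition uniform in $\varepsilon$; a standard diagonal argument then delivers the required local uniform convergence. Once this technicality is dispatched, every remaining step is a direct substitution.
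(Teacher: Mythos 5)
Your argument is correct and is essentially the paper's own proof: the first two equations reduce via $\partial\mathcal{H}_d/\partial C_i$ and $\partial\mathcal{H}_d/\partial D_i$ to the defining relations (\ref{CD}), the last two follow from $\partial\mathcal{H}_d/\partial A_i$ and $\partial\mathcal{H}_d/\partial B_i$ with (\ref{DELperturb}) used to eliminate $\partial U/\partial A_i$ and $\partial U/\partial B_i$, and the convergence is obtained exactly as in the paper from (\ref{convV})--(\ref{convW}) together with the cancellations produced by (\ref{cd}). The only point of difference is that you explicitly flag the $\varepsilon$-dependence of the families $A_i(\varepsilon,\cdot)$ before invoking Proposition \ref{prop31} (a point the paper passes over in silence); be aware, though, that continuity in $(\varepsilon,t)$ plus $\mathcal{C}^2$-regularity in $t$ alone does not yield the uniform second-derivative bounds you claim, so that refinement would require slightly stronger regularity (e.g.\ joint continuity of the $t$-derivatives) than the hypotheses you cite.
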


\begin{proof}
The two first equations (\ref{DELhamiltonien1}) and (\ref{DELhamiltonien2}) arise from the computation of the partial derivatives of $\mathcal{H}_d$ w.r.t. $C_i$ or $D_i$ and the definition (\ref{CD}) of these momenta. The two last ones  (\ref{DELhamiltonien3}) and (\ref{DELhamiltonien4}) are easy consequences of the values of the partial  derivatives of $\mathcal{H}_d$ w.r.t. $A_i$ or $B_i$ and the use of the r.h.s. of (\ref{DELperturb}) to eliminate  $\frac{\partial U}{\partial A_i}$ and $\frac{\partial U}{\partial B_i}$.\\
Now let us prove that the respective left hand-sides of equations (\ref{DELhamiltonien1}) to (\ref{DELhamiltonien4}) converge to the left hand-sides of equations (\ref{CELhamiltonien}) as $\varepsilon$ tends to 0, which depends mainly on (\ref{convV}) and (\ref{convW}). Let $(A,B,C,D)$ be four functions of class $\mathcal{C}^1$ w.r.t. $t$ and let us denote $(a(t),b(t),c(t),d(t))=(A(0,t),B(0,t),C(0,t),D(0,t))$. We get
\begin{center}
\small $V_c(A)-V_s(B)+\omega B\rightarrow \dot{A}(0,t)-\omega B+\omega B=\dot{a}$,\\
\small $V_s(A)+V_c(B)-\omega A\rightarrow \omega A+\dot{B}(0,t)-\omega A=\dot{b}$,\\
\small $m(W_c(A)-W_s(B))-\omega D\rightarrow m(\omega^2 A(0,t)-\ddot{A}(0,t)+2\omega \dot{B}(0,t))-\omega D(0,t)=-\dot{c}$,\\
\small $m(W_s(A)+W_c(B))+\omega C\rightarrow m(-2\omega \dot{A}(0,t)+\omega^2 B(0,t)+\ddot{A}(0,t))+\omega C(0,t)=-\dot{d}$
\end{center}
as $\varepsilon$ tends to $0$. From this we deduce that the schemes (\ref{DELhamiltonien1}) to (\ref{DELhamiltonien4}) converge to (\ref{CELhamiltonien}) and the analogous property for Lagrangian and Hamiltonian is obvious.
\end{proof}

\begin{remark}\rm
Although the discrete hamiltonian equations in the cartesian frame
look very similar to the classical ones (see formula (\ref{eqHamdisc})),
they do not behave covariantly under general change of coordinates since we
might expect, in a rotating frame, equations of the shape
\begin{center}
\textrm{$\Box A_i=\frac{\partial \mathcal{H}_d}{\partial C_i},~\Box B_i=%
\frac{\partial \mathcal{H}_d}{\partial D_i},~~-\Box ^{\star }C_i=-\frac{%
\partial \mathcal{H}_d}{\partial A_i},~-\Box ^{\star }D_i=-\frac{\partial 
\mathcal{H}_d}{\partial B_i}$. }
\end{center}
which are not true. Hence, hamiltonian discrete equations are not covariant in general.
\label{rem41}
\end{remark}

\begin{remark}\rm
Let us suppose that the functions $A_i,B_i$ express some lengths, then both sides of (\ref{DELhamiltonien1}) and (\ref{DELhamiltonien2}) are celerities, i.e. meters/seconds, and both sides of (\ref{DELhamiltonien3}) and (\ref{DELhamiltonien4}) are forces, i.e. Newton.
\end{remark}

\section{Relative equilibria solutions to the generalized $n$-body problem in classical and discrete settings}

We recall that a relative equilibrium solution of a generalized $n$-body
problem is a configuration of $n$ moving particles which are located at
fixed points in a uniformly rotating plane. We mention the terminology used in \cite[pp. 217, pp. 219]{BP} as planar
solution and homographic solution with dilatation equal to 1. Some other
authors call those solutions Lagrange configurations.

We shall study first the case where $(a_i,b_i)$ and $(A_i,B_i)$ are constant
w.r.t. time. We shall give some remarks at the end of this section in the
case when the coordinates $(a_i,b_i)$ and $(A_i,B_i)$ are of the shape $%
(a_i,b_i)=(a_i^0\lambda (t),b_i^0\lambda (t))$ and $(A_i,B_i)=(A_i^0\Lambda
(t),B_i^0\Lambda (t))$ where $\lambda $ and $\Lambda $ are some dilatation
factors. Note that the mutual distances $r_{ij}$ and $R_{ij}$ are constant w.r.t.
time. 

\subsection{Existence of equilibria in galilean frames}

As usually done in the classical case, one may ask if there exist solutions of (\ref{mcelter}) which are constant
w.r.t. time. 

\begin{proposition}
Let us consider a galilean frame, let $I$ be an interval included in $\mathbb{R}$. Let us suppose that one of the two following conditions holds :
\begin{itemize}
\item $[t_0,t_f]\subset I$ and $\Box $ is not defined by $(0,\ldots,0,\gamma_0,0,\ldots,0)\in\mathbb{R}^{2N+1}$,
\item $I=\mathbb{R}$ and $\Box\neq 0$.
\end{itemize}
Then there does not exist solutions of (\ref{mcelter}) remaining constant w.r.t time in the interval $I$.
\label{prop51}
\end{proposition}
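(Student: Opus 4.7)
The plan is to argue by contradiction: assume $\mathbf{x}(t)\equiv\mathbf{x}^0$ is constant on $I$. The right-hand side of (\ref{mcelter}) then depends only on $\mathbf{x}^0$ and is a constant vector $\mathbf{F}^0$, so the left-hand side $-m_i\Box^\star\Box x_{ik}(t)$ must be independent of $t$ on the domain where the equation is imposed. Substituting a constant function into (\ref{ourbox}) and (\ref{mchomat2}) gives
\[
\Box\mathbf{x}(t)=\frac{\mathbf{x}^0}{\varepsilon}\phi(t),\qquad \Box^\star\Box x_{ik}(t)=\frac{x_{ik}^0}{\varepsilon^2}h(t),
\]
with $\phi(t)=\sum_\ell\gamma_\ell\chi(t+\ell\varepsilon)$ and $h(t)=\sum_{\ell,j}\gamma_{\ell+j}\gamma_j\chi(t-j\varepsilon)\chi(t+\ell\varepsilon)$. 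Since two distinct particles cannot coincide (otherwise some $r_{ij}=0$ is singular), at least one coordinate $x_{i_0k_0}^0\ne 0$, and the task reduces to showing that $h$ cannot be constant on the relevant part of $I$.

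In the second case, $I=\mathbb{R}$, I would argue that for $t\notin[t_0-N\varepsilon,t_f+N\varepsilon]$ every factor $\chi(t-j\varepsilon)$ in $h$ vanishes, so $h(t)=0$ and the equation at such $t$ forces $\mathbf{F}^0=0$. The equation then reduces to $\Box^\star\Box\mathbf{x}\equiv 0$ on all of $\mathbb{R}$, in particular on $[t_0,t_f]$. Invoking the adjoint pair $(\Box,\Box^\star)$ introduced in Section~3 for the ordinary and extended scalar products,
\[
0=\langle\Box^\star\Box\mathbf{x},\mathbf{x}\rangle_0=\int_{t_0-N\varepsilon}^{t_f+N\varepsilon}|\Box\mathbf{x}(t)|^2\,dt,
\]
so $\Box\mathbf{x}\equiv 0$ on the extended interval, which (using $x_{i_0k_0}^0\ne 0$) forces $\phi\equiv 0$ there. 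But $\phi$ is a step function that vanishes for $t<t_0-N\varepsilon$ and picks up a jump of size $\gamma_N$ at $t=t_0-N\varepsilon$ when $\chi(t+N\varepsilon)$ switches on, then $\gamma_{N-1}$ at $t=t_0-(N-1)\varepsilon$, and so on. Propagating $\phi\equiv 0$ from left to right through these jumps forces $\gamma_N=\gamma_{N-1}=\cdots=\gamma_{-N}=0$, contradicting $\Box\ne 0$.

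In the first case ($[t_0,t_f]\subset I$ with $\Box$ not of the form $(0,\ldots,0,\gamma_0,0,\ldots,0)$), the previous shortcut is unavailable because we no longer have access to (\ref{mcelter}) at $t$ far outside $[t_0,t_f]$. Here the plan is to show directly that $h$ is non-constant on $[t_0,t_f]$, under the tacit assumption that $t_f-t_0$ is large enough that the boundary strips below are disjoint. The function $h$ is piecewise constant on the strips $I_k=(t_0+(k-1)\varepsilon,t_0+k\varepsilon)$ and $J_k=(t_f-k\varepsilon,t_f-(k-1)\varepsilon)$ for $k=1,\ldots,2N$, and equals $(\sum_\ell\gamma_\ell)^2$ on the central strip $[t_0+2N\varepsilon,t_f-2N\varepsilon]$; on $I_k$ the defining sum is cut by $j\le k-1$ and $\ell\ge 1-k$, removing an explicit family of cross-terms from this central value. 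Equating the $4N$ boundary-strip values to the central one yields a triangular system of polynomial identities on the $\gamma_\ell$: the outermost layer $k=2N$ already gives $\gamma_N\gamma_{-N}=0$, and unrolling the remaining identities from the outside inward forces $\gamma_\ell=0$ for every $\ell\ne 0$, contradicting the hypothesis on $\Box$. The hard part is precisely this combinatorial unrolling in Case~1: without any positive-semidefiniteness trick, one has to track the cascading restrictions on successive boundary strips and verify that $(0,\ldots,0,\gamma_0,0,\ldots,0)$ is indeed the unique solution of the triangular system, which is exactly why that operator must be excluded in the hypothesis.
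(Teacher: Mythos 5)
Your reduction is the same as the paper's: a constant solution has at least one nonzero coordinate, so (\ref{mcelter}) forces $h(t)=\varepsilon^2\,\Box^\star\Box 1(t)$ to be constant on the interval where the equation holds, and $h$ is piecewise constant on $\varepsilon$-strips with central value $(\sum_\ell\gamma_\ell)^2$. Your outermost relation $\gamma_N\gamma_{-N}=0$ is also correct. But the heart of Case 1 --- proving that equality of all boundary-strip values with the central value forces $\gamma_\ell=0$ for every $\ell\neq 0$ --- is exactly what you do not do, and it is not the routine ``triangular unrolling'' you announce. The strip identities are quadratic and the system is not triangular: the first relation only gives the alternative $\gamma_N=0$ or $\gamma_{-N}=0$, the argument branches, and closing each branch needs extra square-type identities. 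For $N=1$ the full system reduces to $\gamma_{-1}\gamma_1=0$, $\gamma_0\gamma_{-1}+\gamma_1(\gamma_0+\gamma_1)=0$ and $(\gamma_{-1}+\gamma_0)(\gamma_{-1}+\gamma_1)=0$, from which one must extract $\gamma_{-1}^2=0$ (resp. $\gamma_1^2=0$) by combining equations; the paper carries out precisely this computation (the five explicit values of $\Box^\star\Box 1$, the observation that the second equals the fourth) before claiming the general $N$ case is similar. Your write-up verifies no value of $N$ completely, so the statement needed in Case 1 is asserted rather than proved; by your own admission this ``combinatorial unrolling'' is the hard part, and that is a genuine gap.

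Your Case 2 is a genuinely different route from the paper's (which deduces $\sum_\ell\gamma_\ell=0$ from $\Box^\star\Box 1=(\Box 1)^2$ on the central strip and then reuses the Case 1 computation), and it has the merit of not depending on Case 1; but one step is inexact under the paper's conventions. Because of the cutoffs $\chi(t-j\varepsilon)$ in (\ref{mchomat2}), the correct Parseval-type identity is $\langle\Box^\star\Box\mathbf{x},\mathbf{x}\rangle_0=\int_{t_0}^{t_f}|\Box\mathbf{x}(t)|^2\,dt$, not an integral over $[t_0-N\varepsilon,t_f+N\varepsilon]$ (already for $N=1$ and constant $\mathbf{x}$ the two differ by the boundary contribution proportional to $\gamma_{-1}^2+\gamma_1^2$). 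Hence you only obtain $\Box\mathbf{x}\equiv 0$ a.e.\ on $[t_0,t_f]$, and your left-to-right propagation cannot start at $t_0-N\varepsilon$. The argument is repairable: propagating inside $[t_0,t_f]$ from $t_0$ rightward gives $\sum_{\ell\geq 0}\gamma_\ell=0$ and $\gamma_{-1}=\cdots=\gamma_{-N}=0$, and from $t_f$ leftward gives $\sum_{\ell\leq 0}\gamma_\ell=0$ and $\gamma_1=\cdots=\gamma_N=0$, whence $\Box=0$, the desired contradiction. So Case 2 stands after this correction, while Case 1 still lacks its essential verification.
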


\begin{proof}
Let us prove the two points by contraposition. Let us suppose that there exists a solution $\{x_{ik}(t)\}$, $i$ and $k$ running from 1
to $n$ and $d$ respectively, of (\ref{mcelter}) remains constant w.r.t. time
inside $I$. Then formula (\ref{mcelter}) shows that $\Box ^{\star }\Box 1$
must be constant, say $cst$, over $I$. In order to compute the value
of $\Box ^{\star }\Box f(t)$ for any function $f(t)$, we use (\ref{mchomat2}%
) and apply it to $f(t)=1$. We look for the coefficients $(\gamma _\ell
)_\ell $ in order to check $\Box ^{\star }\Box 1=cst$ in $I$. In both cases, we shall assume for ease of exposition that $N=1$ since the proof for arbitrary $N$ is similar . 

In the first case, when $[t_0,t_f]\subset I$, we consider the five explicit values of $\Box ^{\star }\Box 1$ in the
convenient intervals
\begin{itemize}
\item  if $t\in [t_0,t_0+\varepsilon[$, $\Box ^{\star }\Box
1(t)=\frac 1{\varepsilon ^2}(\gamma _{-1}(\gamma _{-1}+\gamma
_0+\gamma _1)+\gamma _0(\gamma _0+\gamma _1))$,
\item  if $t\in [t_0+\varepsilon ,t_0+2\varepsilon [$, $\Box ^{\star }\Box
1(t)=\frac 1{\varepsilon ^2}((\gamma _{-1}+\gamma _0)(\gamma _{-1}+\gamma
_0+\gamma _1)+\gamma _1(\gamma _0+\gamma _1))$,
\item  if $t\in [t_0+2\varepsilon ,t_f-2\varepsilon ]$, $\Box ^{\star }\Box
1(t)=\frac 1{\varepsilon ^2}(\gamma _{-1}+\gamma _0+\gamma _1)^2=(\Box
1(t))^2$,
\item  if $t\in ]t_f-2\varepsilon ,t_f-\varepsilon ]$, $\Box ^{\star }\Box
1(t)=\frac 1{\varepsilon ^2}(\gamma _{-1}(\gamma _{-1}+\gamma _0)+(\gamma
_0+\gamma _1)(\gamma _{-1}+\gamma _0+\gamma _1))$,
\item  if $t\in ]t_f-\varepsilon,t_f]$, $\Box ^{\star }\Box
1(t)=\frac 1{\varepsilon ^2}(\gamma _{0}(\gamma _{-1}+\gamma _0)+\gamma _1(\gamma _{-1}+\gamma _0+\gamma _1))$.
\end{itemize}
The second equation being identical to the fourth one, one sees easily that the system implies $(\gamma_{-1},\gamma_0,\gamma_1)=(0,\gamma_0,0)$. The case for arbitrary $N$ is similar.

Now, let us deal with the second case $I=\mathbb{R}$. We recall that $%
\Box \mathbf{x}$ is compactly supported for all $\mathbf{x}$ (see 
\cite[Proposition 2.1]{RS1}). So we see that the l.h.s. of (\ref{mcelter})
vanishes, for all index $i$, outside the interval $[t_0-N\varepsilon
,t_f+N\varepsilon ]$. Since the functions appearing in the rh.s. of (\ref
{mcelter}) are obviously constant, the l.h.s. of (\ref{mcelter}) vanishes
for all $t\in \mathbb{R}$. As a rule, since $\Box ^{\star }\Box
1(t)=(\Box 1(t))^2$ for $t\in [t_0+2N\varepsilon ,t_f-2N\varepsilon]$, we must have $\sum_\ell \gamma_\ell=0$ or equivalently $\gamma_0=-\sum_{\ell\neq 0}\gamma_\ell$ which implies that $\Box=0$ from the previous result, thus contradicting the assumption and this ends the proof.
\end{proof}

\begin{remark}\rm
The study of constant solutions in the newtonian case is much more simple since the functions $f_{ij}$ are all increasing or all decreasing. Indeed, we may prove the result by considering for all $k\in\{1,\ldots,d\}$ the equations (\ref{celuniv}) or (\ref{mcelter}) of index $i\in\{1,\ldots,n\}$ maximizing $x_{ik}$, without 
studying $\Box^\star\Box$ in the various intervals of time.
\label{rem51}
\end{remark}

\begin{remark}\rm
In contrast with Proposition \ref{prop51} and Remark \ref{rem51}, it might exist constant solutions of (\ref{celuniv}) in $\mathbb{R}$. This occurs for instance when considering the Laplace-Sellinger or the London potentials $f_{ij}$.
\end{remark}

\subsection{The algebraic equations of relative equilibria}

We introduce the system of $2n$ algebraic equations 
\begin{equation}
-\lambda m_ix_i=\sum_{j\neq i}{f}_{ij}'(s_{ij})\frac{x_i-x_j}{s_{ij}}
\mbox{ and }-\lambda m_iy_i=\sum_{j\neq i}{f}_{ij}'(s_{ij})\frac{y_i-y_j}{s_{ij}},  
\label{algeq}
\end{equation}
where $s_{ij}=((x_i-x_j)^2+(y_i-y_j)^2)^{1/2}$, which constitute a slight generalization of the algebraic equations for relative equilibria to appear later on. The unknowns are the $2n+1$ real numbers $x_i$, $y_i$ and $\lambda$. The number $\lambda$ is related to the pulsation $\omega$ of the configuration. The $2n$ preceding equations are dependent because they sum to 0 so that $\sum_{k=1}^nm_kx_k=\sum_{k=1}^nm_ky_k=0$ using an argument of symmetry. We note that if all the functions $f_{ij}(r)$ are algebraic w.r.t. $r$ then the functions ${f}_{ij}'(r)/r$ are also algebraic and thus, the equations (\ref{algeq}) are algebraic w.r.t. the coordinates $a_i,b_i$
and $A_i,B_i$. In constrast, those equations are not algebraic w.r.t. $\varepsilon$ or $\omega$. We note also by the way that equations (\ref{algeq}) are not invariant by translation. However, they are invariant by rotation in the plane, that is to say if $(x_i,y_i)$ is a set of solutions, then for all $\alpha\in\mathbb{R}$, 
$(x_i\cos\alpha-y_i\sin\alpha,x_i\sin\alpha+y_i\cos\alpha)$ is another set of solutions of (\ref{algeq}). The problem of finiteness of the quotient set of solutions by the orthogonal group $SO(2,\mathbb{R})$ remains open even in the newtonian case and is known as the Wintner's conjecture mentioned in \cite{SS}. We may conjecture, help to Bezout theorem in algebraic geometry, that (\ref{algeq}) is a system of algebraic equations of rank $2n-2$ with no common zero-hypersurfaces and
thus have finitely many solutions up to rotations.
\begin{proposition}
Let $n=3$. We assume that the three potential functions $f_{ij}$ ($i,j\in\{1,2,3\}$, $i<j$) satisfy the following condition : there exists an injective function $\zeta:\mathbb{R}^\star_+\rightarrow \mathbb{R}$ such that for all $s>0$, one has 
\begin{center}
${f}_{ij}'(s)=s\frac{\zeta(s)}{m_k}$, where $\{i,j,k\}=\{1,2,3\}$.
\end{center} 
Then each configuration $(x_i,y_i)$ satisfying (\ref{algeq}) for some $\lambda$ is either colinear or equilateral.
\end{proposition}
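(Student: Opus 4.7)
The plan is to mimic the classical Lagrange--Laplace argument for Newtonian central configurations, exploiting the symmetric structure forced by the hypothesis $f_{ij}'(s)=s\zeta(s)/m_k$. Introduce $\mu_{ij}=\zeta(s_{ij})/m_k$ for $\{i,j,k\}=\{1,2,3\}$, so that (\ref{algeq}) takes the form $\lambda m_i\mathbf{r}_i+\sum_{j\neq i}\mu_{ij}(\mathbf{r}_i-\mathbf{r}_j)=0$ for $i=1,2,3$, with $\mathbf{r}_i=(x_i,y_i)$.

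First I would sum the three identities to obtain $\lambda(m_1\mathbf{r}_1+m_2\mathbf{r}_2+m_3\mathbf{r}_3)=0$. If $\lambda\neq 0$ the center of mass sits at the origin; if $\lambda=0$ the system depends only on the differences $\mathbf{r}_i-\mathbf{r}_j$ and is thus translation invariant, so by translating we may still assume $\sum_i m_i\mathbf{r}_i=0$.

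Next, taking the planar wedge product of each equation with the corresponding $\mathbf{r}_i$ annihilates both the $\lambda m_i\mathbf{r}_i$ term and the $\mathbf{r}_i\wedge\mathbf{r}_i$ contributions from the sum. With the shorthand $[r,r']:=xy'-x'y$, set $a_1=[r_2,r_3]$, $a_2=[r_3,r_1]$, $a_3=[r_1,r_2]$; then the three remaining identities collapse, after handling signs via antisymmetry, to the single chain of equalities $\mu_{12}\,a_3=\mu_{13}\,a_2=\mu_{23}\,a_1$. On the other hand, substituting $m_3\mathbf{r}_3=-m_1\mathbf{r}_1-m_2\mathbf{r}_2$ (and its cyclic analogues) inside the $a_i$ and simplifying yields the proportions $a_1/m_1=a_2/m_2=a_3/m_3=:\kappa$.

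Two cases now remain. Since $a_1+a_2+a_3$ equals twice the signed area of $\triangle\mathbf{r}_1\mathbf{r}_2\mathbf{r}_3$ and also equals $\kappa(m_1+m_2+m_3)$, the configuration is collinear exactly when $\kappa=0$, which settles the first alternative. Otherwise $a_i=\kappa m_i\neq 0$ for each $i$, and cancelling $\kappa$ in the chain of equalities transforms $\mu_{12}\,a_3=\mu_{13}\,a_2=\mu_{23}\,a_1$ into $\mu_{12}m_3=\mu_{13}m_2=\mu_{23}m_1$, which by the definition of $\mu_{ij}$ is precisely $\zeta(s_{12})=\zeta(s_{13})=\zeta(s_{23})$. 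Injectivity of $\zeta$ then forces $s_{12}=s_{13}=s_{23}$, the equilateral case. The only subtle point is the $\lambda=0$ case, where the center-of-mass relation is not automatic; translation invariance of the system in that regime bypasses the issue, and everything else amounts to bookkeeping with the wedge product.
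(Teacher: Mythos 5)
Your proof is correct, but it follows a genuinely different route from the paper's. You run the classical Lagrange-style moment argument: wedge each equation with $\mathbf{r}_i$, use the center-of-mass relation to get $a_1/m_1=a_2/m_2=a_3/m_3=\kappa$, identify $\kappa=0$ with vanishing signed area (collinearity), and otherwise cancel $\kappa$ in $\mu_{12}a_3=\mu_{13}a_2=\mu_{23}a_1$ to reach $\zeta(s_{12})=\zeta(s_{13})=\zeta(s_{23})$ and conclude by injectivity. The paper instead rewrites the equations as the $2\times 2$ linear systems of type (\ref{mat}) (one row being the center-of-mass identity, the other the equation for the third body) and splits on whether one of the three matrices is regular: in the regular case it solves for $(x_1,x_2)$ and $(y_1,y_2)$ in terms of $(x_3,y_3)$ and finds that the three bodies and the origin are collinear, while if all three matrices are singular the determinant conditions give exactly your chain of $\zeta$-equalities, hence the equilateral case. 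The two proofs are essentially dual presentations of the same dichotomy (your $\kappa\neq 0$ versus the paper's singularity of all three matrices), but each buys something: the paper's linear-algebra version yields the slightly stronger conclusion that in the collinear case the line passes through the origin, whereas your version makes explicit a point the paper leaves implicit, namely that the normalization $\sum_i m_i\mathbf{r}_i=0$ only follows from summing the equations when $\lambda\neq 0$, and that the case $\lambda=0$ must be handled by translation invariance; your treatment of that case is correct since the alternative ``collinear or equilateral'' is translation-invariant.
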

\begin{proof}
Suppose that $(x_i,y_i)$ is a solution of (\ref{algeq}) for some $\lambda$. We deduce from (\ref{algeq}) the following equations
\begin{equation}
\begin{pmatrix}
m_1 & m_2\\
\displaystyle \frac{{f}_{13}'(s_{13})}{s_{13}} & \displaystyle \frac{{f}_{23}'(s_{23})}{s_{23}}
\end{pmatrix}\begin{pmatrix}x_1-x_3\\x_2-x_3\end{pmatrix}=
\begin{pmatrix}-(m_1+m_2+m_3)x_3\\\lambda m_3x_3
\end{pmatrix}
\label{mat}
\end{equation}
and an entirely similar system for the vector $\tra{(y_1-y_3,y_2-y_3)}$. By permuting the indices $1,2,3$, we obtain six bidimensional linear systems, inducing only three different matrices. Let us suppose that one of these three matrices is regular, say the one occuring in (\ref{mat}). Then we obtain $(x_1,x_2)$ and $(y_1,y_2)$ as functions of $x_3$ and $y_3$ respectively and we check easily that the four points $(x_i,y_i)$ and $(0,0)$ lie on the same straight line. Now, when all the three matrices are singular, one has the following system
\begin{center}
$\left\{\begin{array}{l}
{f}_{13}'(s_{13})m_2s_{23}={f}_{23}'(s_{23})m_1s_{13}\\
{f}_{12}'(s_{12})m_3s_{23}={f}_{23}'(s_{23})m_1s_{12}\\
{f}_{13}'(s_{13})m_2s_{12}={f}_{12}'(s_{12})m_3s_{13}
\end{array}\right.$
\end{center}
which yields simply $\zeta(s_{12})=\zeta(s_{13})=\zeta(s_{23})$. Due to the assumptions on $\zeta$, we get $s_{12}=s_{13}=s_{23}$.
\end{proof}
\begin{remark}\rm
In contrast with the proof given in \cite{BP}, we do not use Galileo's law.
\end{remark}
\begin{remark}\rm
If we choose for some fixed number $\beta\in\mathbb{R}^\star$, $\zeta(s)=m_1m_2m_3\beta s^{\beta-2}$, we recover the homogeneous potential function occuring in Section 1.
\end{remark}

Now, let us connect the system (\ref{algeq}) to the search of constant solutions of (\ref{CELperturb}), respectively (\ref{DELperturb}). We shall say that a solution $\{(a_i,b_i)\}$ of (\ref{CELperturb}) is a relative equilibrium if all coordinates $(a_i,b_i)$ are independent on $t\in\mathbb{R}$. Similarly, we say that a solution $\{(A_i,B_i)\}$ of (\ref{DELperturb}) is a relative equilibrium if all coordinates $(A_i,B_i)$ are independent on $t\in[t_0+2N\varepsilon,t_f-2N\varepsilon]$. This specific interval is chosen in such a way that relative equilibria exist in each setting and are closely connected. Let us note that, in the case the previous interval is replaced with $\mathbb{R}$, no solution would be found as an analysis similar to the proof of Proposition \ref{prop51} shows. 

We note that the function $W_c(1)(t)$ takes a constant value inside $[t_0+2N\varepsilon,t_f-2N\varepsilon]$. We assume in the remainder of this paper that this constant is positive and we denote it by $\Omega^2(\varepsilon)$. Let us remark that if $\Box$ satisfies (\ref{convcond}), then $\lim_{\varepsilon\to 0}\Omega^2(\varepsilon)=\omega^2$ locally uniformly in $]t_0,t_f[$, as the  formula (\ref{convW}) in Proposition \ref{prop31} shows, and the previous assumption is satisfied. 

We observe that, in order that a configuration of $n$ particles $\{(a_i,b_i)\}$ is a relative
equilibrium solution of (\ref{CELperturb}), it is necessary and sufficient that the set $%
\{(a_i,b_i)\}$ is solution of (\ref{algeq}) with $\lambda =\omega ^2$. Indeed, this is a simple consequence of plugging constant functions $(a_i,b_i)$ in (\ref{CELperturb}). In a similar way, we obtain the following
\begin{proposition}
In order that a configuration of $n$ particles $\{(A_i,B_i)\}$ is a relative
equilibrium solution of (\ref{DELperturb}), it is necessary and sufficient that the set $%
\{(A_i,B_i)\}$ is solution of (\ref{algeq}) with $\lambda=\Omega^2(\varepsilon)$ 
\label{prop4.1}
\end{proposition}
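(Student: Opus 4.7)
The plan is to evaluate the left-hand sides of (\ref{DELperturb}) on constants and recognize the resulting system as (\ref{algeq}) with $\lambda=\Omega^2(\varepsilon)$.

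The key ingredient is to show that on the interval $[t_0+2N\varepsilon,t_f-2N\varepsilon]$ one has $W_c(1)=\Omega^2(\varepsilon)$ (true by the standing assumption introduced just before the proposition) and $W_s(1)=0$. To prove the latter, I would start from the explicit expression for $W_s$ given in the proof of Proposition \ref{prop31}: on this interval every characteristic function equals $1$, so after the substitution $m=\ell+j$ and the expansion $\sin(\omega(m-j)\varepsilon)=\sin(\omega m\varepsilon)\cos(\omega j\varepsilon)-\cos(\omega m\varepsilon)\sin(\omega j\varepsilon)$, the double sum factors as the difference of two equal products of one-dimensional sums, hence vanishes. A more conceptual alternative is to invoke the factorization $W_s=V_c^\star V_s-V_s^\star V_c$ from (\ref{op5}) together with the fact that $V_c(1)$ and $V_s(1)$ are constants on the interior and that $V_c^\star,V_s^\star$ act on such constants as $V_c,V_s$ up to the sign dictated by the reversal of coefficients.

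By linearity of $W_c$ and $W_s$, if the $A_i,B_i$ are constants, then $W_c(A_i)(t)=A_i\Omega^2(\varepsilon)$, $W_s(A_i)(t)=0$, and analogously for $B_i$, for every $t\in[t_0+2N\varepsilon,t_f-2N\varepsilon]$. Since the mutual distances $R_{ij}$ are likewise constant, (\ref{DELperturb}) collapses to the pair of scalar conditions $-m_i\Omega^2(\varepsilon)A_i=\sum_{j\neq i}f_{ij}'(R_{ij})(A_i-A_j)/R_{ij}$ and $-m_i\Omega^2(\varepsilon)B_i=\sum_{j\neq i}f_{ij}'(R_{ij})(B_i-B_j)/R_{ij}$, which is precisely (\ref{algeq}) with $x_i=A_i$, $y_i=B_i$, $s_{ij}=R_{ij}$ and $\lambda=\Omega^2(\varepsilon)$. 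The converse direction is obtained by reading the same chain of equalities in reverse. The only non-routine step is the trigonometric identity $W_s(1)=0$; the rest is linearity and direct substitution, exactly parallel to the classical argument giving (\ref{algeq}) with $\lambda=\omega^2$ from (\ref{CELperturb}).
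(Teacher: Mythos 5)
Your proof is correct and follows essentially the same route as the paper: the paper likewise evaluates $W_c(1)=\Omega^2(\varepsilon)$ and $W_s(1)=0$ on $[t_0+2N\varepsilon,t_f-2N\varepsilon]$ and then substitutes constant $(A_i,B_i)$ into (\ref{DELperturb}) to recover (\ref{algeq}) with $\lambda=\Omega^2(\varepsilon)$; your substitution $m=\ell+j$ combined with the sine addition formula is simply an explicit version of the ``easy symmetry argument'' the paper invokes for $W_s(1)=0$. One small caveat on your secondary ``conceptual'' alternative: on constants the reversed coefficients of $V_s^\star$ sum to the \emph{same} value as those of $V_s$ (no net sign change, since reversing the index range compensates the oddness of the sine), and it is precisely this equality that makes $V_c^\star V_s(1)-V_s^\star V_c(1)$ vanish, so be careful not to insert a spurious sign there.
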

\begin{proof} 
We have in $[t_0+2N\varepsilon,t_f-2N\varepsilon]$
\begin{center}
$\displaystyle W_c(1)=\Omega^2(\varepsilon)=\frac{1}{\varepsilon ^2}\hspace{-0.2cm}\sum_{\tiny 
\begin{array}[t]{c}
|\ell|\leq 2N\\
|j|\leq N\\ 
|\ell+j|\leq N
\end{array}
}\hspace{-0.5cm}\gamma _{\ell+j}\gamma _j\cos
(\ell\varepsilon \omega )\mbox{, }\displaystyle W_s(1)=\frac{1}{\varepsilon^2}\hspace{-0.2cm}\sum_{\tiny 
\begin{array}[t]{c}
|\ell|\leq 2N\\
|j|\leq N\\ 
|\ell+j|\leq N
\end{array}
}\hspace{-0.5cm}\gamma _{\ell+j}\gamma _j\sin
(\ell\varepsilon \omega )$.
\end{center}
By using an easy symmetry argument, we prove that $W_s(1)(t)=0$, for all $t\in[t_0+2N\varepsilon,t_f-2N\varepsilon]$. So, when we suppose that the functions $\{(A_i,B_i)\}$ remain constant in $[t_0+2N\varepsilon,t_f-2N\varepsilon]$, formula (\ref{DELperturb}) gives rise to (\ref{algeq}) with $\lambda=\Omega^2(\varepsilon)$.
\end{proof}

\subsection{Expansion factor and constants of motion for generalized $n$-body problem with homogeneous potential functions}
\label{sec53}

When the potentials are homogenous with exponent $\beta$, the solution sets of the two systems of algebraic equations (\ref{algeq}), obtained when  
$\lambda=\omega^2$ and $\lambda=W_c(1)$, are the same up an homothety. The following result shows this claim. 
\begin{proposition}
Let us suppose that $f_{ij}(r)=\mu _{ij}r^\beta $ with $\beta \neq 2$, then
to each configuration of $n$ bodies in relative equilibrium for (\ref{CELperturb})
corresponds an homothetic configuration in relative equilibrium for (\ref{DELperturb}) whose homothety ratio is the real number $\displaystyle \varphi (\varepsilon)=\left( \frac{\omega ^2}{\Omega^2(\varepsilon)}\right) ^{\frac 1{2-\beta }}$. Furthermore, the kinetic energies $T_C$ and $T_D$, the potential energies $U_C$ and $U_D$ and the angular momenta $\sigma_C$ and $\sigma_D$ of those two homothetic configurations are linked together as
\begin{center}
$T_D=\varphi(\varepsilon)^2T_C$,~~$U_D=\varphi(\varepsilon)^\beta U_C$ and $\sigma_D=\varphi(\varepsilon)^2\sigma_C$.
\end{center}
\end{proposition}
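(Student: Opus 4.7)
The plan is to exploit the scaling symmetry carried by the algebraic system (\ref{algeq}) when the potentials are homogeneous. By Proposition~\ref{prop4.1} and the observation that immediately precedes it, the classical relative equilibria $(a_i,b_i)$ of (\ref{CELperturb}) and the discrete relative equilibria $(A_i,B_i)$ of (\ref{DELperturb}) are respectively the solutions of (\ref{algeq}) with $\lambda=\omega^2$ and with $\lambda=\Omega^2(\varepsilon)$. It therefore suffices to produce the second family as a homothetic image of the first and to read off the homothety ratio from the matching of the two values of $\lambda$.

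The core computation is a direct substitution. Plugging $(x_i,y_i)\mapsto(\varphi x_i,\varphi y_i)$ into (\ref{algeq}) sends each distance $s_{ij}$ to $\varphi s_{ij}$; the function $f'_{ij}(s_{ij})/s_{ij}=\beta\mu_{ij}s_{ij}^{\beta-2}$ therefore picks up a factor $\varphi^{\beta-2}$, the entire right-hand side is multiplied by $\varphi^{\beta-1}$, while the left-hand side $-\lambda m_i x_i$ is multiplied only by $\varphi$. The scaled configuration is thus a solution of (\ref{algeq}) with the new parameter $\lambda'=\lambda\varphi^{\beta-2}$. Imposing $\lambda=\omega^2$ and $\lambda'=\Omega^2(\varepsilon)$, and using $\beta\neq 2$, one inverts to obtain $\varphi(\varepsilon)=\bigl(\omega^2/\Omega^2(\varepsilon)\bigr)^{1/(2-\beta)}$.

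The three energy identities then follow by inspection. The relation $U_D=\varphi^\beta U_C$ is immediate from $R_{ij}=\varphi r_{ij}$ and the homogeneity of degree $\beta$ of each $f_{ij}$. For $T_D$ and $\sigma_D$, I would evaluate the classical formulas $\frac12\sum m_i(\dot x_{i1}^2+\dot x_{i2}^2)$ and $\sum m_i(x_{i1}\dot x_{i2}-x_{i2}\dot x_{i1})$ on the physical trajectories (\ref{homographicsol}); since the two configurations rotate at the common pulsation $\omega$ and differ only by the global amplitude factor $\varphi$, one has $\sum m_i(A_i^2+B_i^2)=\varphi^2\sum m_i(a_i^2+b_i^2)$, and the claimed identities $T_D=\varphi^2 T_C$ and $\sigma_D=\varphi^2\sigma_C$ follow at once. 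No step is technically delicate; the only real content is the scaling argument producing $\varphi$, and the convergence $\Omega^2(\varepsilon)\to\omega^2$ recorded in Proposition~\ref{prop31} automatically gives $\varphi(\varepsilon)\to 1$ as $\varepsilon\to 0$.
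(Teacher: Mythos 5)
Your proposal is correct and follows essentially the same route as the paper: identify relative equilibria with solutions of (\ref{algeq}) at $\lambda=\omega^2$ and $\lambda=\Omega^2(\varepsilon)$, use the homogeneity of $f_{ij}(r)=\mu_{ij}r^\beta$ to see that the homothety $x\mapsto\varphi x$ changes $\lambda$ into $\lambda\varphi^{\beta-2}$ (the paper phrases this as substituting $A_i=\varphi a_i$, $B_i=\varphi b_i$ and matching the two rewritten systems, yielding $\Omega^2(\varepsilon)\varphi^{2-\beta}=\omega^2$), and then read off the scalings of $U$, $T$ and $\sigma$ from their homogeneity degrees on the trajectories (\ref{homographicsol}). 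The only cosmetic difference is your use of the standard angular-momentum expression $\sum_i m_i(x_{i1}\dot x_{i2}-x_{i2}\dot x_{i1})$ versus the paper's quadratic expression in the amplitudes; both are quadratic in $(a_i,b_i)$, so the factor $\varphi^2$ follows identically.
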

\begin{proof}
Indeed, we see that the two systems of equations (\ref{algeq}) obtained when $\lambda=\omega^2$ and $\lambda=\Omega^2(\varepsilon)$ may be rewritten respectively as 
\begin{equation*}
-\frac{\omega^2}{\beta}m_ia_i=\displaystyle\sum_{j\neq i}\mu _{ij}r_{ij}^{\beta -2}(a_i-a_j)%
\mbox{, }-\frac{\omega^2}{\beta}m_ib_i=\displaystyle\sum_{j\neq i}\mu _{ij}r_{ij}^{\beta
-2}(b_i-b_j)  
\end{equation*}
and
\begin{equation*}
-\frac{\Omega^2(\varepsilon)}{\beta}m_iA_i=\displaystyle\sum_{j\neq i}\mu _{ij}R_{ij}^{\beta -2}(A_i-A_j)%
\mbox{, }-\frac{\Omega^2(\varepsilon)}{\beta}m_iB_i=\displaystyle\sum_{j\neq i}\mu _{ij}R_{ij}^{\beta
-2}(B_i-B_j).  
\end{equation*}
Searching for solutions of the second system of the shape $A_i=a_i\varphi$ and $B_i=b_i\varphi$, both systems agree if and only if $\Omega^2(\varepsilon)\varphi(\varepsilon )^{2-\beta }=\omega ^2$ whence the value of $\varphi$.

Now, let us deal with the integrals of motion. Since $U$ is homogeneous of degree $\beta$, we have $U_D=\varphi(\varepsilon)^\beta U_C$. Next, we use formulas (\ref{TU}) and (\ref{homographicsol}) to compute $T_C=\omega^2 I_0$ and $T_D=\omega^2\varphi^2(\varepsilon)I_0$ where $I_0=\frac12\sum_im_i(a_i^2+b_i^2)$ is the moment of inertia. Lastly, the only nonzero component of the angular momentum tensor is equal to $\sigma_C=\sum_{i<j}\mu _{ij}(x_{i1}\dot x_{j2}-\dot x_{i1}x_{j2})=\omega\sum_{i<j}\mu _{ij}(a_ia_j+b_ib_j)$ and obviously $\sigma_D=\varphi^2(\varepsilon)\sigma_C$.
\end{proof}
The homothety ratio $\varphi(\varepsilon)$ will be called the expansion factor. For arbitrary $N$, the condition (\ref{convcond}) ensures that $\lim_{\varepsilon\to 0}\varphi(\varepsilon)=1$ but the converse does not hold. For example, when $N=1$, the expansion factor is equal to {\small 
\begin{equation*}
\varphi (\varepsilon )=\left( \frac{\omega ^2\varepsilon ^2}{(\gamma
_{-1}+\gamma _0+\gamma _1)^2+2\gamma _{-1}\gamma _1(\cos 2\omega \varepsilon
-1)+2\gamma _0(\gamma _{-1}+\gamma _1)(\cos \omega \varepsilon -1)}\right)
^{\frac 1{2-\beta }}.
\end{equation*}}
The existence of a finite nonzero limit to $\varphi(\varepsilon)$ as $\varepsilon$ tends to 0 is equivalent to the following two equations
\begin{center}
$\gamma _{-1}+\gamma_0+\gamma _1=0$ and $\gamma _0\gamma _{-1}+\gamma _0\gamma _1+4\gamma_{-1}\gamma _1=-1$.
\end{center} 
This system admits two families of solutions. The first one is a family of operators $\Box$ satisfying $\Box t=-1$ which do not check the condition (\ref{convcond}). The second one is an affine straight line of operators $\Box$ satisfying (\ref{convcond}) and given by 
\begin{equation}
\Box^{[r,s]}\mathbf{x}(t)=-\frac{s}{\varepsilon}\mathbf{x}(t-\varepsilon)\chi(t-\varepsilon)+\frac{s-r}{\varepsilon}\mathbf{x}(t)\chi(t)+\frac{r}{\varepsilon}\mathbf{x}(t+\varepsilon)\chi(t+\varepsilon)
\label{boxrs}
\end{equation}
together with the condition $r+s=1$, and that we have already encountered in \cite{RS1}. 

\section{Numerical experiments}

We present in the following the planar graphs associated to the restricted 3-body problem yielding a heavy, a light and a negligible bodies. The parameter $\mu$ stands for the normalized ratio between the lightest and the sum of the lightest and heaviest bodies. We choose to work only with the libration points $L_4$ and $L_5$ and not with the three unstable eulerian points $L_1,L_2,L_3$, see \cite{BP}. 
We consider an intermediate time $t_\nu\in[t_0+2N\varepsilon,t_f-2N\varepsilon]$ at which the particle $P_3$ is located at the neighbourhood of $L_4$ (or $L_5)$. We use some specific operators $\Box$ of the shape (\ref{boxrs}) and especially $\Box^{[1,0]}$, $\Box^{[0,1]}$, $\Box^{[\frac12,\frac12]}$ and $\Box^{[\frac{1-i}2,\frac{1+i}2]}$. 

Numerical experiments consist in solving (\ref{mcelter}) and (\ref{DELperturb}). Although these equations are functional ones, we solve them numerically by computing $A_3(t)$ and $B_3(t)$ on the grid $\{t_\nu+k\varepsilon,k\in\mathbb{Z}\}\cap[t_0+2N\varepsilon,t_f-2N\varepsilon]$.
 The convergence mode of operators $\Box$ in the function space of continuously differentiable functions on $[t_0,t_f]$ is locally uniform in $]t_0,t_f[$ and this induces numerical difficulties relative to the stability of the Cauchy problem at $t=t_0$ or $t=t_f$. This is the reason why the intermediate time $t_\nu$ has been introduced.

In order to compare the performances of each operator $\Box^{[r,s]}$ presented previously, we compute the error norm $err:=\|\mathbf{x}(t_\nu+M\varepsilon)-\mathbf{x}(t_\nu)\|_2$ with $M=5\times m$ and $m\in\{0,\ldots,100\}$ . We use for this equations (\ref{mcelter}) and the system (\ref{DELhamiltonien1})-(\ref{DELhamiltonien4}) which amounts to equations (\ref{DELperturb}), abbreviated respectively as DEL (Discrete Euler-Lagrange equations) and DHE (Discrete Hamiltonian Equations).\\

Most numerical experiments use the value $\mu=0.012$ associated to the system consisting of the Earth, the Moon and a rocket. The first one illustrates the fact that solving equations DHE give more accurate results than solving equations DEL, see Figure \ref{figure0}. In addition, we note that whenever the operators $\Box^{[1,0]}$ and $\Box^{[0,1]}$ are not convenient to solve (\ref{mcelter}), they become the best choice for solving (\ref{DELperturb}).
\begin{figure}[!ht]
\centering\includegraphics[width=9cm,height=6cm]{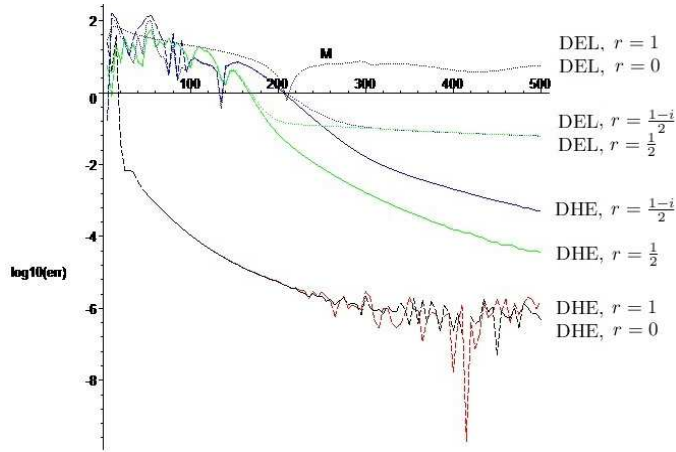}
\caption{2-norm of the error $\mathbf{x}(t_\nu+M\varepsilon)-\mathbf{x}(t_\nu)$}
\label{figure0}
\end{figure}

The second experiment uses the operator $\Box_q:=\Box^{[\frac{1-i}2,\frac{1+i}2]}$. The step number $m$ per period which induces the smallness of $\varepsilon$ is set to $m=15,30,50$ with $[t_\nu,t_\nu+M\varepsilon]=[0,5\pi]$. The six following figures \ref{figure1a} to \ref{figure1f} provide the trajectories of the three bodies when using equations DEL and DHE. The results with the three other operators $\Box$ are quite similar. 
\begin{figure}[!ht]
\begin{minipage}[c]{.46\linewidth}
\includegraphics[width=4cm]{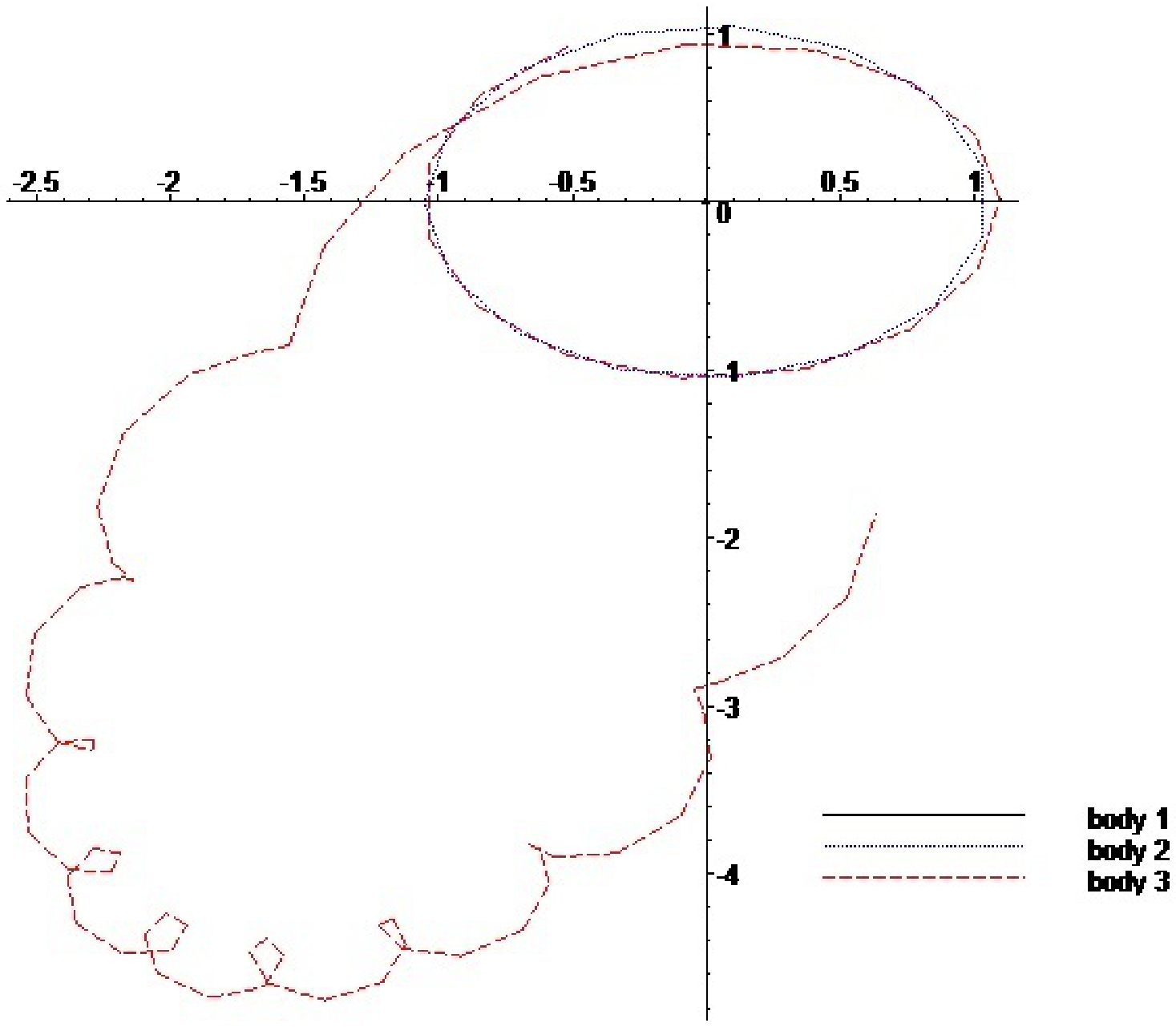}
\caption{DEL, $k=5$, $m=15$}
\label{figure1a}
\end{minipage} \hfill
\begin{minipage}[c]{.46\linewidth}
\includegraphics[width=6cm,height=4cm]{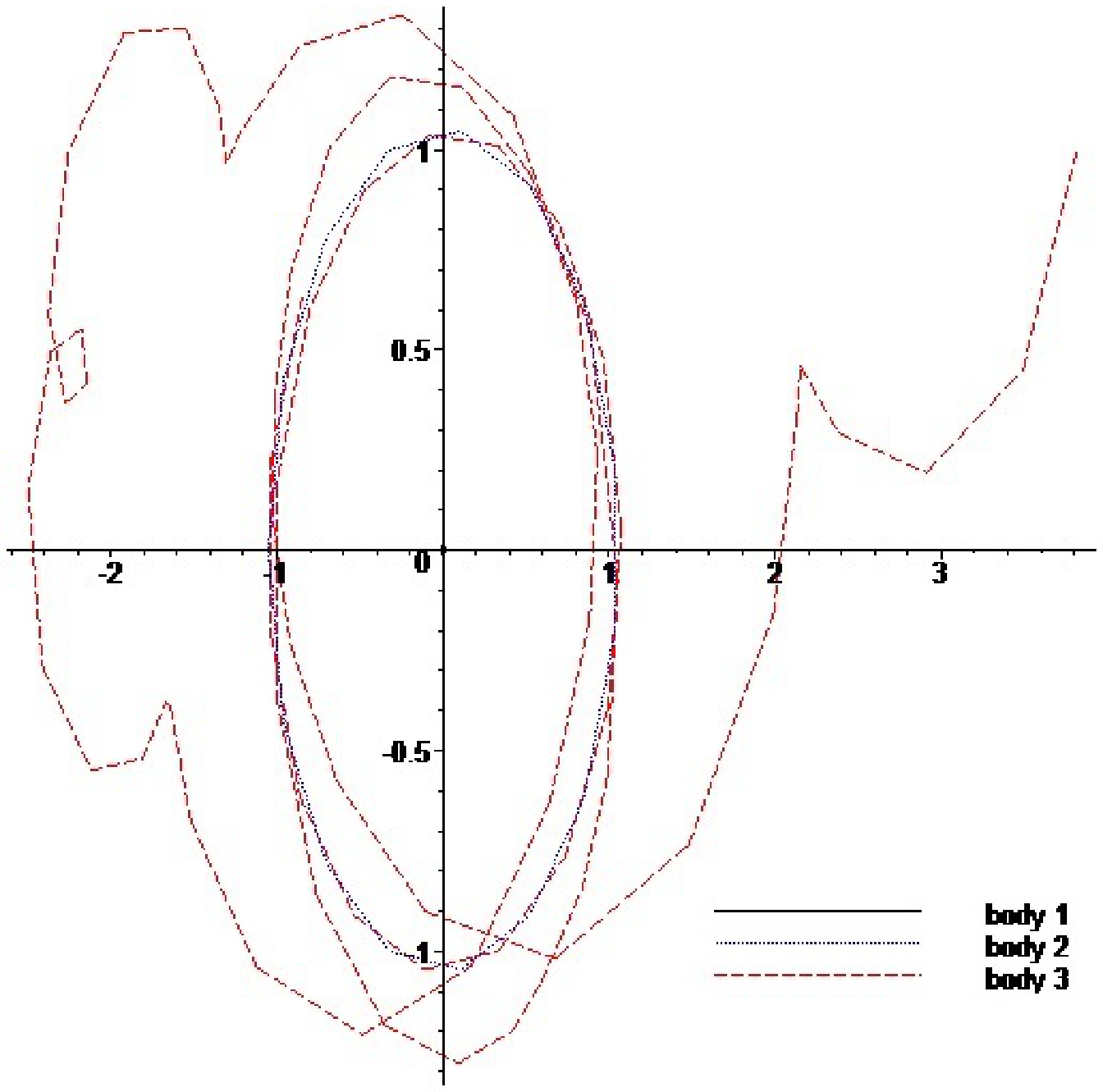}
\caption{DHE, $k=5$, $m=15$}
\label{figure1b}
\end{minipage}\hfill
\begin{minipage}[c]{.46\linewidth}
\includegraphics[width=4cm]{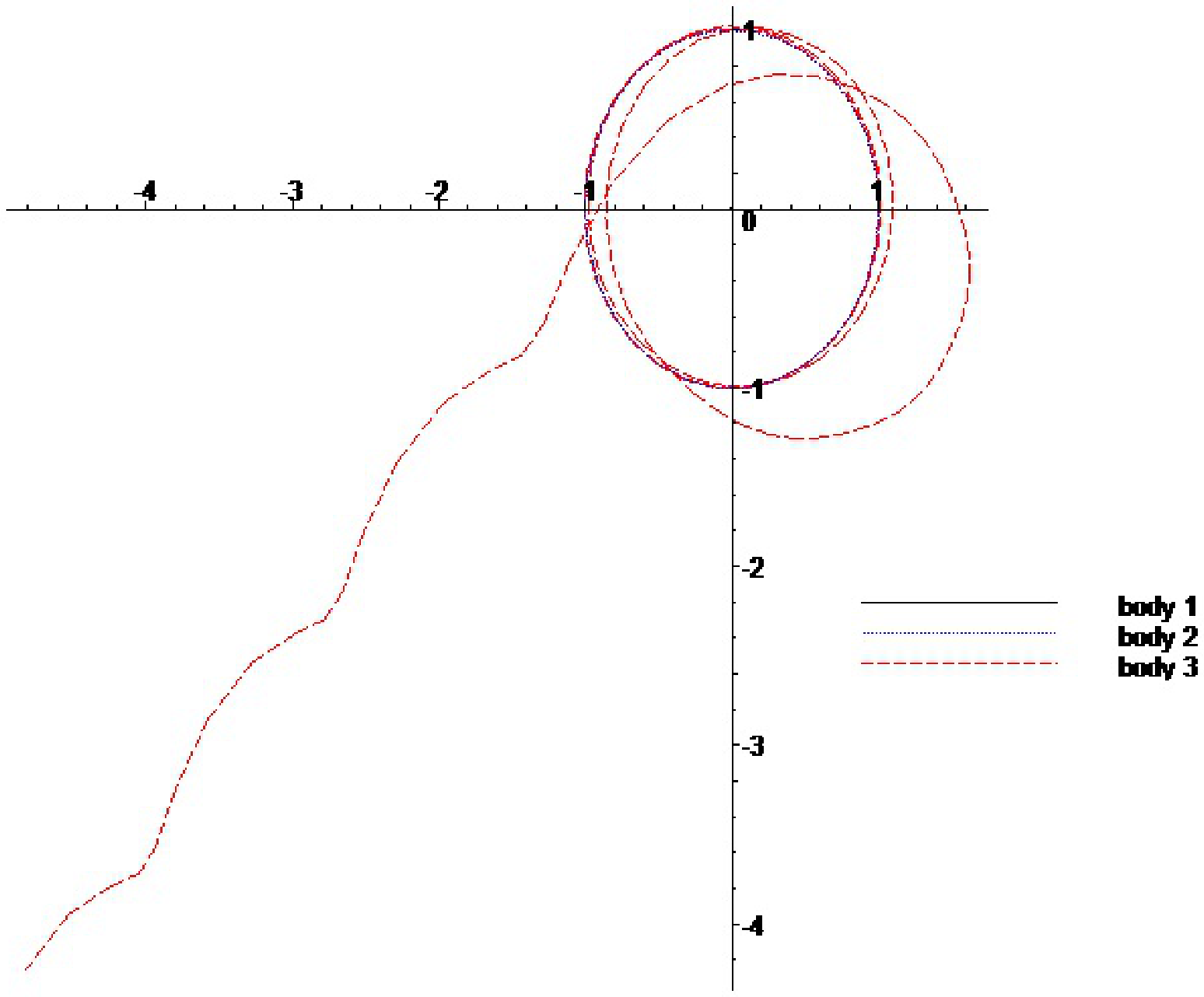}
\caption{DEL, $k=5$, $m=30$}
\label{figure1c}
\end{minipage}\hfill
\begin{minipage}[c]{.46\linewidth}
\includegraphics[width=4cm,height=3.3cm]{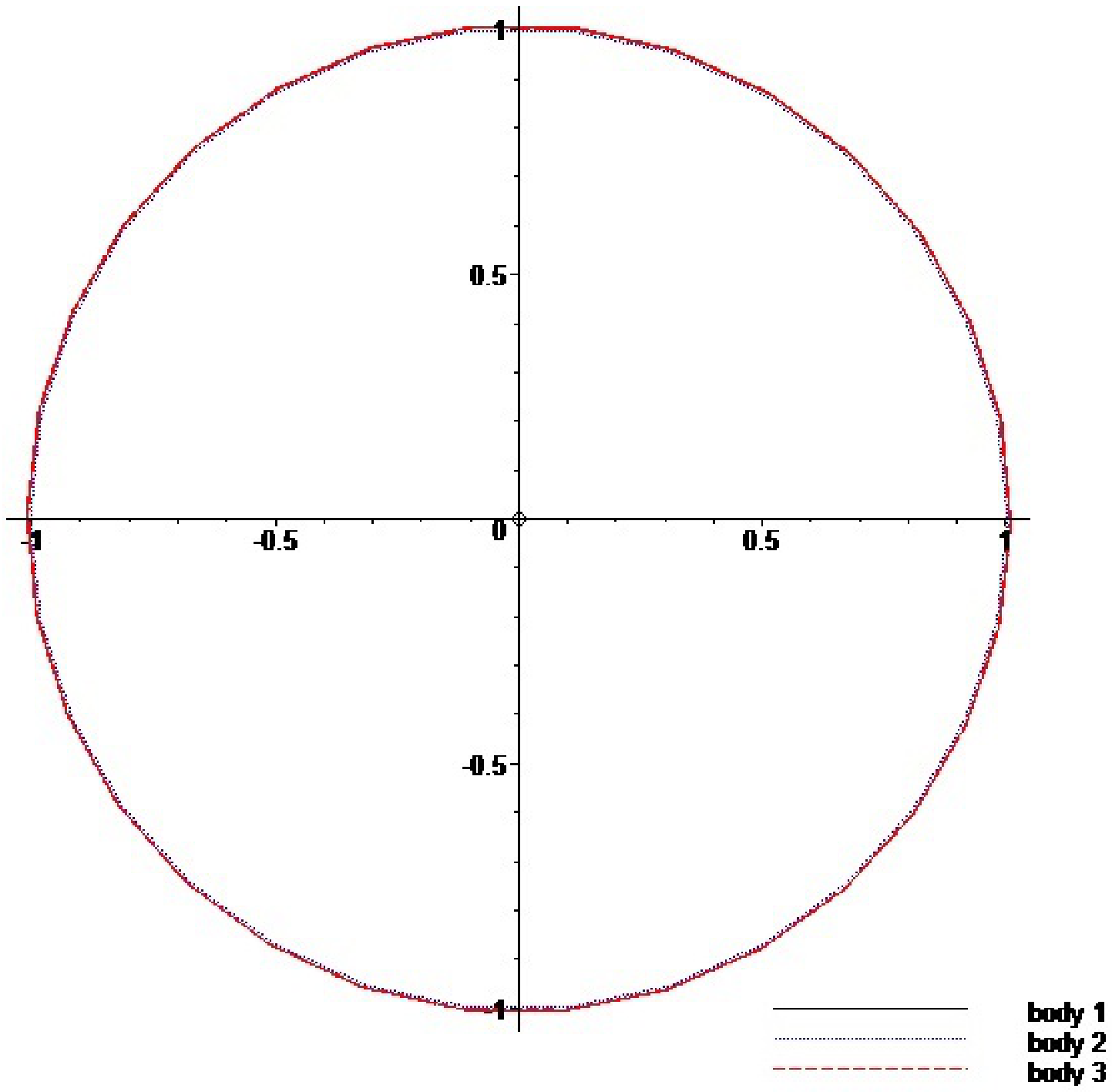}
\caption{DHE, $k=5$, $m=30$}
\label{figure1d}
\end{minipage}\hfill
\begin{minipage}[c]{.46\linewidth}
\includegraphics[width=4cm]{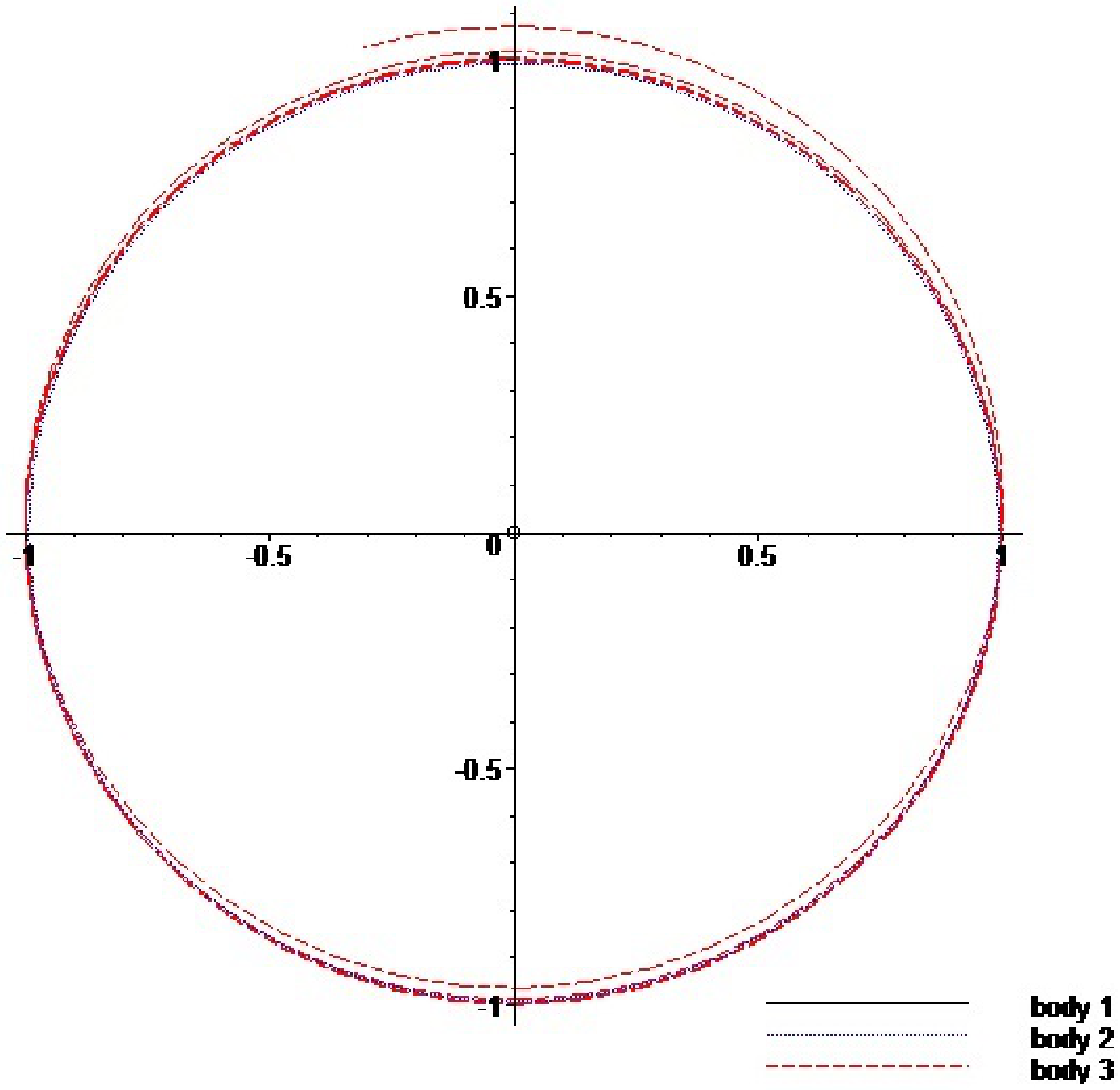}
\caption{DEL, $k=5$, $m=50$}
\label{figure1e}
\end{minipage}\hfill
\begin{minipage}[c]{.46\linewidth}
\includegraphics[width=4cm]{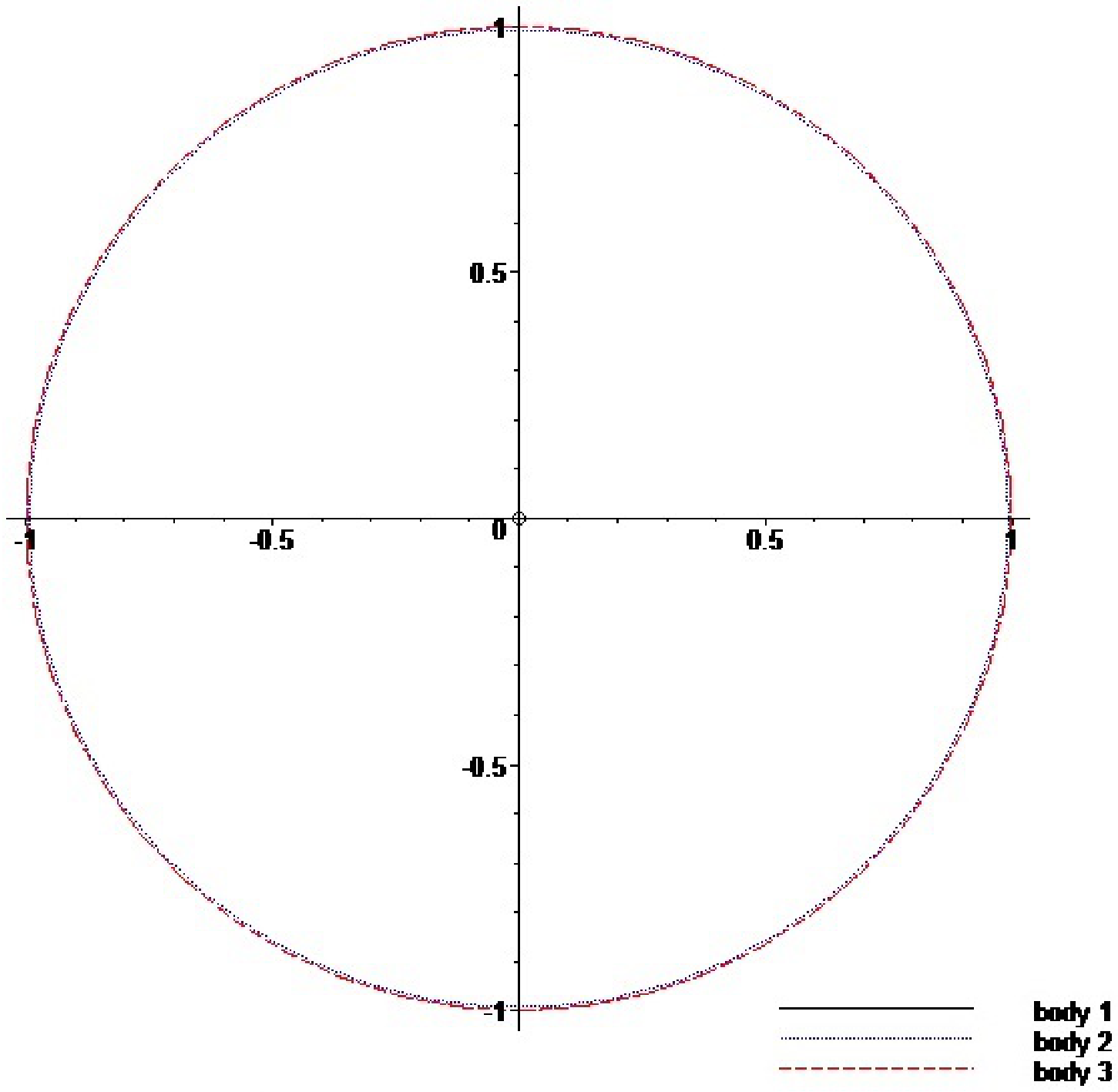}
\caption{DHE, $k=5$, $m=50$}
\label{figure1f}
\end{minipage}
\end{figure}
\noindent These figures illustrate the essential role played by $M$	and the obvious performance of DHE versus DEL. As we can see, 30 steps per period are necessary to reach an acceptable solution of the restricted three-body problem.

The next experiment highlights the crucial role played by the perturbations of the position at time $t_\nu$ of the lightest body, particularly in the equations (\ref{mcelter}). We still use the operator $\Box_q:=\Box^{[\frac{1-i}2,\frac{1+i}2]}$ and the system Earth-Moon-rocket. Using the framework of \cite{BP}, we perturb $A_3(0,t_\nu)=\mu-\frac12$ and $B_3(0,t_\nu)=\frac{\sqrt{3}}{2}$ respectively as $A_3(\varepsilon,t_\nu)=(\mu-\frac12)\varphi(\varepsilon)+\delta$ and $B_3(\varepsilon,t_\nu)=\frac{\sqrt{3}}2\varphi(\varepsilon)+\delta'$ and we choose $\delta=\delta'=0.01$ and next, $\delta=\delta'=0.05$. As we can see in Figures \ref{figure2a} and \ref{figure2b}, the trajectory of the rocket becomes more unstable as $\delta$ increases.
\begin{figure}[!ht]
\begin{minipage}[c]{.46\linewidth}
\includegraphics[width=4cm]{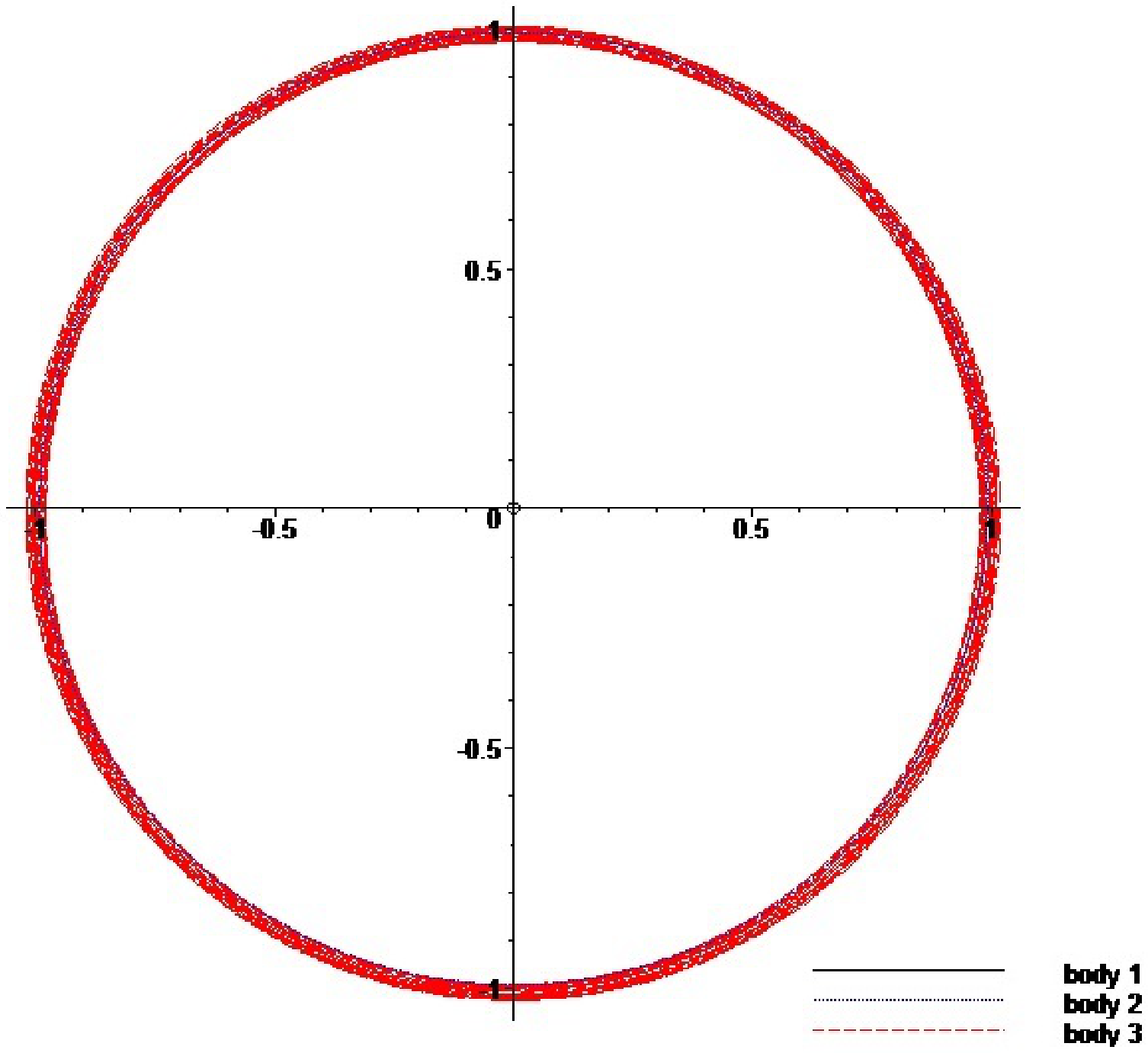}
\caption{DHE, $k=20$, $m=50$, $\delta=0.01$}
\label{figure2a}
\end{minipage} \hfill
\begin{minipage}[c]{.46\linewidth}
\includegraphics[width=4cm]{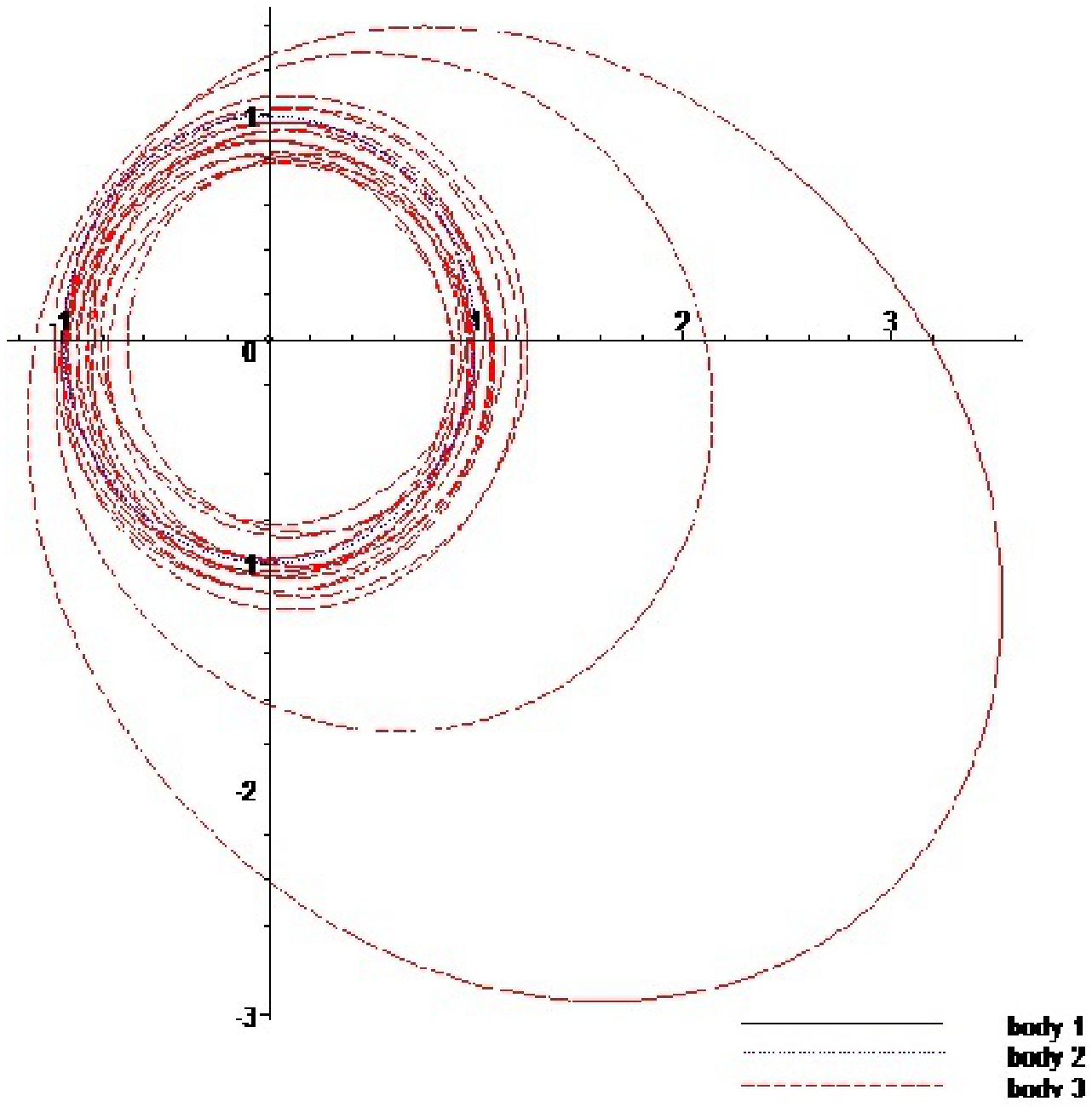}
\caption{DHE, $k=20$, $m=50$, $\delta=0.05$}
\label{figure2b}
\end{minipage}
\end{figure}

At last, we modify the ratio $\mu$ and give up the system earth-moon-rocket. The goal of this experiment is to illustrate the unstability of the system when $\mu>0.0385$ (see \cite{BP} for example). We use for this two values of $\mu$ which are greater than $0.04$.
\begin{figure}[!ht]
\begin{minipage}[c]{.46\linewidth}
\includegraphics[width=4cm]{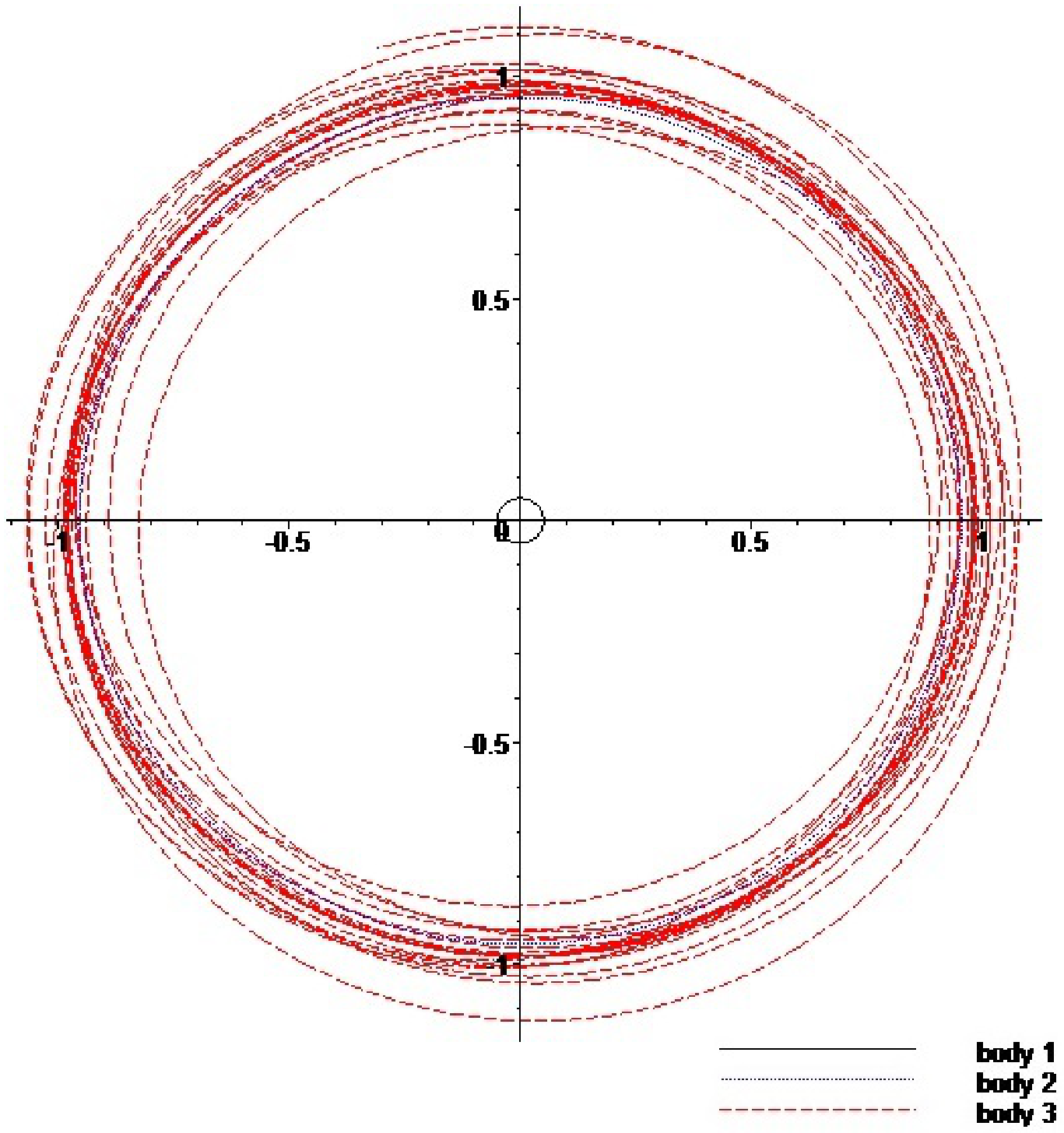}
\caption{DHE, $k=20$, $m=50$, $\mu=0.05$, $\Box=\Box_q$}
\label{figure3a}
\end{minipage} \hfill
\begin{minipage}[c]{.46\linewidth}
\includegraphics[width=4cm]{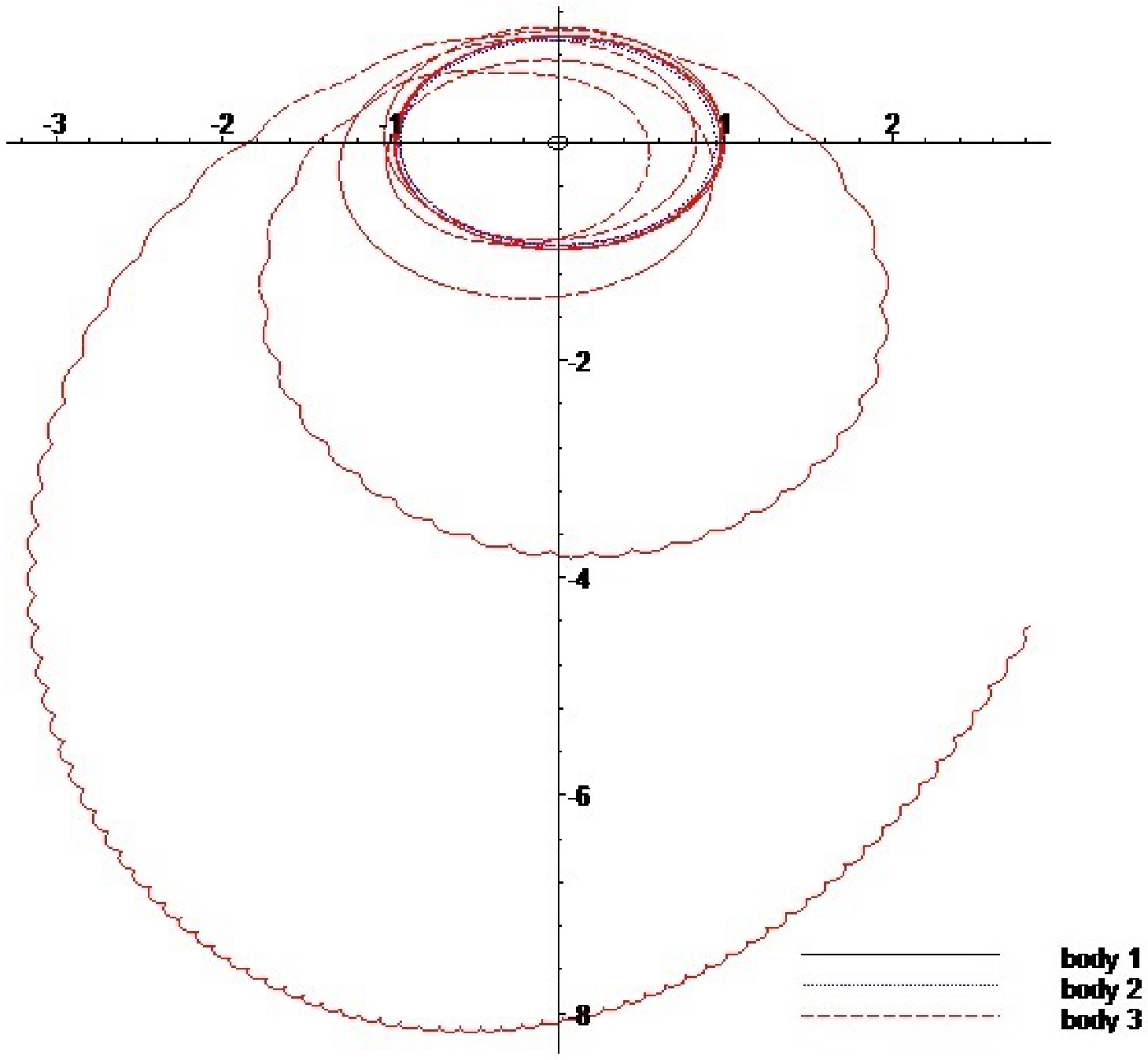}
\caption{DHE, $k=20$, $m=50$, $\mu=0.06$, $\Box=\Box_q$}
\label{figure3b}
\end{minipage}
\end{figure}
As we can see in the two last figures \ref{figure3a} and \ref{figure3b}, as soon as the ratio is greater than the limit value mentioned previously, the system becomes unstable.

\section{Conclusion}

The aim of this work was to apply the formalism of quantum calculus of variations to celestial mechanics.\ As one knows, the search for particular solutions of the many-bodies problem has a particular importance in the historical development of celestial mechanics, probably because of the feeling that toy-models may be realistic and also that the simply-to-state but hard-to-prove questions in this domain are almost all linked with these particular solutions. In that sense, although we did not give the details of its application, the Q.C.V equally applies to other generalized polygonal solutions, see for instance \cite{El1,El2}. However, performing the effective experiments when applying Q.C.V. to choreographic solutions is much more complicated. Let us explain the difficulties in the case of a planar three-bodies choreographic solution such as the remarkable figure-eight solution found in 2000 by A. Chenciner and R. Montgomery \cite{CM}. Keeping the previous notations, and dealing with the grid  $[t_0+2N\varepsilon ,t_f-2N\varepsilon ]\cap (t_\nu +\varepsilon\mathbb{Z})$, the coordinates of three particles $x_{i,n}=x_i(t_0+n\varepsilon )$, $y_{i,n}=y_i(t_0+n\varepsilon )$ with $N\leq n\leq M-N$ and $M\leq (t_f-t_0)/\varepsilon $, $M$ being a multiple of 3, may be expressed as system of algebraic equations with the additional constraints that $a_i,_{n+M/3}=a_{i,n}$, $b_{i,n+M/3}=b_{i,n}$. The main task is to device an efficient method to solve the previous system which cannot be triangularized. We address this issue that is studied in a companion paper of the present one.

The link between the constants of motion of solutions of classical or discrete equations of motion has been established only in the case of relative equilibria. However, for arbitrary solutions, a phenomenon of diffusion of constants of motion appears due to the fact that the classical
derivative and the generalized derivatives do not commute. Lastly, the application of Q.C.V. to systems of particles interacting according to non-homogeneous potentials, for instance those of London and Laplace-Sellinger, is interesting and its treatment may be done through Puiseux series for the solutions of the many-bodies problem in a rotating frame. Indeed in this general situation, we do not have anymore an homothety between the relative equilibria in the newtonian and in the Q.C.V. contexts.




\end{document}